\documentclass[12pt,authoryear]{elsarticle}
\usepackage[margin=2.3cm, includehead]{geometry}
\makeatletter
\def\ps@pprintTitle{%
  \let\@oddhead\@empty
  \let\@evenhead\@empty
  \let\@oddfoot\@empty
  \let\@evenfoot\@oddfoot
}
\makeatother
\usepackage{currfile}
\usepackage{mathrsfs}
\usepackage[mathscr]{eucal}
\DeclareMathAlphabet{\mathpzc}{OT1}{pzc}{m}{it}
\usepackage{natbib}
\usepackage{makeidx}
\usepackage{multirow}
\usepackage{multicol}
\usepackage{booktabs}
\usepackage{enumerate}
\usepackage{setspace}
\usepackage{float}
\usepackage{hhline}
\usepackage{bbm}
\usepackage[dvipsnames,svgnames,table]{xcolor}
\usepackage{graphicx}
\usepackage{epstopdf}
\usepackage{ulem}
\usepackage{amsmath}
\usepackage{amsfonts}
\usepackage{url}
\usepackage{amssymb}
\usepackage{graphics}
\usepackage{epsfig}
\usepackage{verbatim}
\usepackage{amsthm}
\usepackage{geometry}
\geometry{left=1in,right=1in,top=1in,bottom=1in}
\usepackage{framed}    
\usepackage{scalefnt}
\usepackage{alltt}
\usepackage{hyperref}
\hypersetup{
    bookmarks=true,         
    unicode=false,          
    pdftoolbar=true,        
    pdfmenubar=true,        
    pdffitwindow=true,      
    pdftitle={My title},    
    pdfauthor={Author},     
    pdfsubject={Subject},   
    pdfnewwindow=true,      
    pdfkeywords={keywords}, 
    colorlinks=true,        
    linkcolor=blue,         
    citecolor=blue,        
    filecolor=magenta,      
    urlcolor=cyan           
}

\onehalfspacing

\def\i.i.d.{\buildrel {\rm i.i.d.} \over \sim}

\def\cw#1 { \overset{\mathbb{P}}{\underset{#1}{\longrightarrow}} }

\def\Natu0{\mathbb{N}_0}

\def\E#1{{\mathrm E}\left[#1\right]}

\def\EE#1{{\mathbb E}\left[#1\right]}

\def\Var#1{{\mathrm Var}\left(#1\right)}

\def \rcov#1#2 {{\rm cov}_{#1}\left( #2\right)}

\oddsidemargin0in
\textwidth6.5in
\addtolength{\topmargin}{-.75in}
\textheight 8.5in
\newtheorem{lemma}{Lemma}

\newtheorem{remark}{Remark}

\newtheorem{model}{Model}

\begin{document}
	\begin{titlepage}	
		\thispagestyle{empty}

\title{
A Classification of Observation-Driven  State-Space Count Models for Panel Data } 

\author{Jae Youn Ahn\fnref{thirdfoot}\corref{bbb}}
\author{Himchan Jeong \fnref{firstfoot}\corref{bbb}}
\author{Yang Lu\fnref{secondfoot}\corref{bbb}}
\author{Mario V.~W\"uthrich\fnref{fourthfoot}\corref{bbb}}

\cortext[bbb]{Corresponding authors/equal contribution}
\fntext[thirdfoot]{Department of Statistics, Ewha Womans University, Seoul, Republic of Korea. Email: \url{jaeyahn@ewha.ac.kr}}
\fntext[firstfoot]{Department of Statistics and Actuarial Science, Simon Fraser University, BC, Canada. Email: \url{himchan_jeong@sfu.ca}}
\fntext[secondfoot]{Department of Mathematics and Statistics, Concordia University, Montreal, QC, Canada. Email: \url{yang.lu@concordia.ca}}
\fntext[fourthfoot]{RiskLab, Department of Mathematics, ETH Zurich, Switzerland.
Email: \url{mario.wuethrich@math.ethz.ch}
}

\begin{abstract}
State-space models are widely used in many applications. In the domain of count data, one such example is the model proposed by \cite{harvey1989time}. Unlike many of its parameter-driven alternatives, this model is observation-driven, leading to closed-form expressions for the predictive density. In this paper, we demonstrate the need to extend the model of \cite{harvey1989time} by showing that their model is not variance stationary. Our extension can accommodate for a wide range of variance processes that are either increasing, decreasing, or stationary, while keeping the tractability of the original model. 
Simulation and numerical studies are included to illustrate the performance of our method.

\end{abstract}

\maketitle

\textbf{Keywords:} dependence, posterior ratemaking, dynamic random effects, conjugate-prior, local-level models, state-space model, experience rating.

JEL Classification: C32; C53

MSC:62M10.

\end{titlepage}

\newpage

\setcounter{footnote}{0}

\section{Introduction}
\label{sec: Introduction}
Time series of counts data are widely used in many areas such as insurance, finance, marketing, economics, etc. According to \cite{cox1981statistical}, there are two major types of time series models, called observation-driven and parameter-driven, respectively. In the count time series framework, the most popular observation-driven time series models are thinning based models, such as INAR($p$) and INGARCH models; see, e.g., \cite{lu2021predictive, davis2021count} for a review. These models are not state-space based, whereas the best known state-space models include, for instance, \cite{zeger1988regression, henderson2003serially, fruhwirth2006auxiliary, cui2009new, davis2009negative, jung2011dynamic, jia2023latent}, to name a few. 

Compared to INAR($p$) and INGARCH models, state-space models have several advantages:
\begin{itemize}
    \item First, it is more convenient to include covariates (i.e., regressors).
    \item  Second, it is more convenient to address missing values, as well as changes of exposures in a state-space framework.
    \item  Third, stationarity (or non-stationary) is more tractable under a state-space framework, and in the case of a stationary process, the marginal distribution is often simple to work out.
\end{itemize}

These three properties can be essential in applications involving panel (i.e., longitudinal) data. One typical example is car insurance pricing, where the insurer observes, for each year, the values of the covariates, as well as a count response variable representing the annual number of claims. Let us explain the importance of these three properties in more detail. 

\paragraph{Allowing for covariates} Most INAR and INARCH (or INGARCH) type models are used without including covariates. The only exceptions we are aware of are  \cite{davis2003observation} and \cite{agosto2016modeling}. These models directly postulate conditional distributions of the future observations, whose parameters are functions of the \textit{current} values of the covariates. A drawback of this approach is that past values of the covariates do not enter into the conditional distribution. To see why past covariate values could be important for car insurance pricing, let us assume, for expository purpose, that the covariate $X_t$ is univariate, and that given past claim numbers $Y_{t-1}, Y_{t-2},\ldots$ and past and current covariate values, the conditional expectation (i.e., the premium) of $Y_t$ is increasing in $X_t$. In other words, $X_t$ measures the underlying risk of the policy. Then, \textit{given $Y_{t-1}, \ldots$} and $X_t$, the premium should be decreasing in past covariate values $X_{t-1},\ldots$, this is because the larger $X_{t-1}, \ldots$, the more ``efforts" the policyholder made in the past to arrive at the given numbers of claims $Y_{t-1},\ldots$. Because these efforts are statistically likely to continue in the future\footnote{And they should be compensated to give incentives for safe driving, due to bonus-malus systems used in insurance pricing.}, the premium should be decreasing in $X_{t-1},\ldots$; we refer to Equation (15) of \cite{DionneVanasse1989} for an example of a premium function that satisfies this decreasingness constraint.

\paragraph{Accounting for missing values and change of exposure} 
In car insurance, it is common for the insurance policies to be analyzed by calendar year. Then for each policy, the first observation is usually left truncated (due to policy inceptions during the calendar year), with an exposure equal to only a fraction of a year. Similarly, the last observation could be censored because of early termination of the policy. These differences of exposure can be conveniently handled by multiplying the stochastic Poisson parameter in the state-space model by an offset term being equal to the fraction of the year covered. It is less straightforward to adjust for such changes of exposures under the INAR and the INGARCH framework, respectively. 

\paragraph{Analyzing stationarity and stationary distribution}
In a longitudinal data context, the likelihood function involves the initial distribution of the observed process. Indeed, for each given individual, the joint probability density function (pdf) of the first $n$ observations can be decomposed as
$$
f(Y_1,Y_2,\ldots,Y_T)=f(Y_1)f(Y_2|Y_1)f(Y_3|Y_1, Y_2)\cdots f(Y_T|Y_{T-1},\ldots, Y_1),
$$
where the first term $f(Y_1)$ is the marginal pdf of the first response $Y_1$. In observation-driven count models such as INAR($p$) and IN(G)ARCH, the subsequent terms corresponding to the conditional distributions are usually more tractable, but the first term can be rather cumbersome. This first term can be omitted, only if the time series length $T$ is very large. 
When $T$ is small, however, its omission induces a bias. 

In state-space models, on the other hand, the term $F(Y_1)$ is often tractable, since $Y_1$ is the output of a latent variable $\Theta_1$, whose distribution is usually chosen simple. For instance, in many parameter-driven models, the latent process $(\Theta_t)$ is assumed stationary and ergodic, following a Gaussian AR(1) or an autoregressive gamma process. Then, the marginal distribution of $\Theta_1$ is simple.

Despite the three aforementioned advantages of state-space count models, many \textit{parameter-driven} state-space models often suffer from a much higher computational burden since the latent process $(\Theta_t)$ has to be integrated out, which leads to a $T$-dimensional integral that may be approximated via Monte Carlo simulation;  see, e.g., \cite{chan1995monte}, \cite{fruhwirth2006auxiliary}. This makes their implementation challenging, especially in a longitudinal data context with a large cross-sectional dimension;  see, e.g., \cite{lu2018dynamic} for a discussion in the context of insurance pricing.  

\cite{harvey1989time}'s model (henceforth HF model) is one of the rare examples of  \textit{observation-driven} state-space models, that enjoys the three properties above, while still being tractable. 
More precisely, in contrast to its \textit{parameter-driven} counterparts, the dynamics of this latter model is not exogenous, but endogenous, in the sense that the dynamics depends not only on the past values of the state variable, but also on the past values of the observed responses
\begin{equation}\label{data-driven}
  \Theta_{t+1}|(\Theta_{1:t}, Y_{1:t})~=~  \Theta_{t+1}|(\Theta_t,Y_{1:t}).
\end{equation}
where $Y_{1:t}=(Y_1,\ldots, Y_t)$ and $\Theta_{1:t}=(\Theta_1,\ldots, \Theta_t)$ denote the processes of the past claim observations, $(Y_t)$, and the latent risk factors up to time $t$, $(\Theta_t)$, respectively. The tractability of the predictive distribution arises from the Poisson-gamma conjugacy, by \textit{assuming} that the conditional distributional on the right hand side of \eqref{data-driven} is a gamma distribution, while the conditional distribution in the measurement equation of $Y_t$, given $\Theta_t$, is assumed to be Poisson. This model can be regarded as the count valued analog of a Bayesian state-space model that relies on conjugate priors, such as \cite{smith1986non} and \cite{shephard1994local} for  real-valued univariate processes, and \cite{uhlig1997bayesian} for real-valued multivariate processes. It has recently been applied by \cite{youn2022simple} in an insurance pricing context. 

In this paper we start by explaining that in the HF model, the state variable follows a (multiplicative) random walk, and it has a non-stationary (increasing) variance process. This explosion-in-variance property might not be appropriate in many applications. Therefore, we extend the HF model to accommodate for various other types of variance dynamics. In particular, we classify the extended class of observation-driven state-space models into several groups:

\begin{itemize}
\item[(a)] The original HF model, which has an \textit{explosive (non-stationary)} dynamics with increasing variance $\Var{\Theta_t}$ with time $t \ge 1$.
\item[(b)] The second class corresponds to the case where, when $t$ goes to infinity, the latent process $(\Theta_{t})$
degenerates to a constant and, hence, the uncertainty related to the non-observability of the latent factor $\Theta_t$ asymptotically vanishes.
\item[(c)] In the third class, the latent process $(\Theta_{t})$ considered in \eqref{data-driven} has a variance process that is bounded (from zero and infinity). This class includes the special cases where the variance process is time-invariant or converging. This third case is probably the most realistic situation for car insurance pricing, where we learn the unobservable risk factors of the insurance policyholders over time, but there always remains some uncertainty. 
\end{itemize}

The models in these three classes differ in terms of their (variance) stationarity properties, but they all enjoy the three aforementioned properties in terms of covariates, change of exposure, and stationary distributions. 

Our paper also contributes to the forecasting literature on exponential moving average or exponential smoothing;  see \cite{hyndman2008forecasting}, Chapter 16, for a review of such methods for count data. This literature has traditionally focused on models with increasing variance, such as the HF model, which gives rise to an exponentially weighted moving average (EWMA) predictor. In this paper we show that many count process models with bounded variance also allow for EWMA predictors, hence, broadening the scope of exponential smoothing methods.  

The rest of the paper is organized as follows. Section \ref{sec.2} extends the HF model, by allowing some of the parameters of the HF model to be time-varying. Section \ref{sec.3} discusses various specifications of this extended HF family, and it classifies them into different classes according to their stationarity (or non-stationarity) behavior, see Table \ref{tab:typology}, below. Section \ref{sec.4} illustrates the difference of their long-term dynamics through simulations. Section \ref{sec.5} compares these models using a real insurance dataset. Section \ref{sec.6} concludes. The mathematical proofs are provided in the appendix.  


\section{The extended HF model}
\label{sec.2}
Throughout this paper, we use the following notation.
\begin{itemize}
  \setlength\itemsep{0em}
	\item ${\rm Gamma}(\alpha, \beta)$: gamma distribution with shape parameter $\alpha>0$  and
	rate parameter $\beta>0$. It has mean $\alpha/\beta$ and variance $\alpha/\beta^2$. By convention, we use ${\rm Gamma}(0, \beta)$ to denote a constant zero.
	\item ${\rm Beta}(\alpha, \beta)$: beta distribution on $(0,1)$ with mean $\frac{\alpha}{\alpha+\beta}$ and variance $\frac{\alpha\beta}{(\alpha+\beta)^2 (\alpha+\beta+1)}$.
By convention, we use ${\rm Beta}(\alpha, 0)$ to denote a constant one.
	\item ${\rm Pois}( \lambda)$: Poisson distribution
 with mean $\lambda>0$.
 \item ${\rm NB}(\lambda, \Gamma)$: negative binomial (NB) distribution
	with mean $\lambda$ and variance  $\lambda + \lambda^2/\Gamma.$
\end{itemize}
We recall the usual Poisson-gamma relationship. If $Y$ is Poisson,  given $\Theta$, with mean $\Theta$, and $\Theta$ follows a gamma prior distribution, then the posterior distribution of $\Theta$, given $Y$, is still a gamma distribution, and the marginal distribution of $Y$ is a NB distribution.

\subsection{The model}
We provide an observation-driven state-space model with constant mean, which generalizes the HF model.
\begin{model}
\label{mod.0}
Given exogenous processes
$  \left(\lambda_t\right)_{t\ge 1}, \ \left(q_t^*\right)_{t\ge 1}
  \ \hbox{and}\
  \left(q_t^{**}\right)_{t\ge 1}
  $
satisfying for $t\ge 1$
\begin{equation}\label{eq.21}
0\le q_t^{*}\le q_t^{**}\le 1 \quad \text{ and } \quad q_t^{**}, \lambda_t>0,
\end{equation}
the response variables  $(Y_{t})_{t \geq 1}$ and the state-space variables (random effects) $(\Theta_{t})_{t \geq 1}$ satisfy:
\begin{enumerate}[$i)$]
\item The conditional distribution of 
 $Y_t$, given the state variable and past observations\footnote{By convention, for $t=1$, the information set $\sigma(Y_{1:(t-1)})$ reduces to the trivial $\sigma$-field.} $Y_{1:(t-1)}$, is Poisson 
\begin{equation}
\label{eq2}
   Y_t | \left( Y_{1:(t-1)}, \Theta_{1:t} \right) \sim {\rm Pois}\left( \lambda_t\Theta_t\right), \qquad \text{ for $t \ge 1$}.
\end{equation}
  \item   At time $t=1$, $\Theta_1$ is gamma distributed as
\begin{equation}\label{eq1}
  \Theta_1\sim {\rm Gamma}\left(\alpha_{1|0}, \beta_{1|0}\right),
\end{equation}
where, for identification purposes, we assume equal deterministic parameters $\alpha_{1|0}=\beta_{1|0}>0$, so that $\EE{\Theta_1}=1$. 
 \item  At time $t\ge 1$, the filtering distribution of $\Theta_{t}$, given past observations 
 $Y_{1:t}$, is gamma
 \begin{equation}\label{eq.3400}
 \Theta_{t}| Y_{1:t}\sim {\rm Gamma}\left(   \alpha_{t}, \beta_{t}\right),
 \end{equation}
where $\alpha_{t}>0$ and  $\beta_{t}>0$ are deterministic functions of
$Y_{1:t}$ and $\lambda_{1:t}$ up to time $t$
\begin{equation}\label{eq41}
	\alpha_{t}=
\begin{cases}
		\alpha_{t|t-1}+Y_t, & \hbox{if $Y_t$ is observed};\\
		\alpha_{t|t-1}, & \hbox{otherwise};
\end{cases}
\end{equation}
and
\begin{equation}\label{eq42}
	\beta_{t}=
\begin{cases}
		\beta_{t|t-1}+\lambda_t,
& \hbox{if $Y_t$ is observed} ;\\
		\beta_{t|t-1}, & \hbox{otherwise}.
\end{cases}
\end{equation}
\item At time $t+1\ge 1$, the predictive distribution of $\Theta_{t+1}$, given $Y_{1:t}$, is  gamma with
\begin{equation}
	\label{stillgamma0}
\Theta_{t+1}|Y_{1:t} \sim {\rm Gamma}\left( \alpha_{t+1|t}, \beta_{t+1|t}\right),
\end{equation}
with for $t\ge 1$
\begin{eqnarray*}
  \alpha_{t+1|t}&=& q_{t}^*\alpha_{t}+ \left( q_{t}^{**}- q_{t}^{*}\right)\beta_{t},\\
  \beta_{t+1|t}&=&q_{t}^{**}\beta_{t}.
\end{eqnarray*}
\end{enumerate}
\end{model}
Definitions \eqref{eq41} and \eqref{eq42} follow from Bayes' rule, and $\alpha_t>0$ and $\beta_t>0$ are deterministic functions of the past observations $Y_{1:t}$, up to time $t$, and of $\lambda_{1:t}$, the latter allows to
integrate time-varying covariates.
From this, we deduce that the conditional distribution of $Y_{t}$, given $Y_{1:(t-1)}$, is negative binomial with 
\begin{equation}\label{NB log-likelihood sequence}
Y_{t}\,|\, Y_{1:(t-1)} \sim {\rm NB}
\left(
\lambda_{t}\, \frac{\alpha_{t|t-1}}{\beta_{t|t-1}},\,
\alpha_{t|t-1}\right),
\end{equation}
where the conditional probability mass function is given as follows
\begin{equation}\label{NB condi}
  f\left(\left. Y_t \right| Y_{1:(t-1)}\right)
  = \frac{\Gamma\left(Y_t+\alpha_{t|t-1}\right)}{Y_t!\,\Gamma\left( \alpha_{t|t-1}\right)}
  \left(\frac{\lambda_t }{\lambda_t  + \beta_{t|t-1}}\right)^{Y_t}
       \left(\frac{\beta_{t|t-1}}{\lambda_t  + \beta_{t|t-1}}\right)^{\alpha_{t|t-1}}.
     \end{equation}
In particular, the predictive mean is
\begin{align}
    \label{predictivemean}
    \mathbb{E} [ Y_{t}|Y_{1:(t-1)}]&= \lambda_t \, \frac{\alpha_{t|t-1}}{\beta_{t|t-1}}. 
\end{align}

\subsection{Stochastic representation of the observation-driven property}
Model \ref{mod.0} defines the conditional distributions of the latent variable $\Theta_t$, given the past (or past and current) observations. However, it does not provide a state equation directly linking $\Theta_{t}$ with $\Theta_{t+1}$. To work out this state equation, we first recall the following lemma. 
\begin{lemma}[\cite{lukacs1955characterization}]\label{lem.1}
Consider two independent random variables
\begin{equation}
\label{betaBeta}
\begin{aligned}
 \Theta&\sim {\rm Gamma}(\alpha, \beta),\\
  B&\sim {\rm Beta}\left(q^*\alpha, (1-q^*)\alpha\right),
\end{aligned}
\end{equation}
where $\alpha>0, \beta>0$, and $q^*\in (0,1]$ are given constants.
Then, their product $$\Theta B\sim {\rm Gamma}(q^*\alpha, \beta).$$
\end{lemma}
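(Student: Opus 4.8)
The plan is to prove the identity directly, by computing the density of the product $Z=\Theta B$ and recognizing it as a ${\rm Gamma}(q^*\alpha,\beta)$ density. First I would dispose of the boundary case $q^*=1$: by the convention adopted above, $B\sim{\rm Beta}(\alpha,0)$ is the constant $1$, so $Z=\Theta\sim{\rm Gamma}(\alpha,\beta)={\rm Gamma}(q^*\alpha,\beta)$ and there is nothing to prove. Hence assume $q^*\in(0,1)$, and write $a=q^*\alpha$, $b=(1-q^*)\alpha$, so that $a,b>0$ and $a+b=\alpha$.

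Since $\Theta$ and $B$ are independent with densities supported on $(0,\infty)$ and $(0,1)$, respectively, the density of $Z$ follows from the change of variables $(\Theta,B)\mapsto(\Theta,Z)$ with $Z=\Theta B$, giving $f_Z(z)=\int_z^\infty f_B(z/\theta)\,f_\Theta(\theta)\,\theta^{-1}\,d\theta$ for $z>0$, where the lower limit $\theta>z$ encodes the constraint $z/\theta\in(0,1)$. Substituting the gamma and beta densities and collecting powers of $\theta$, the key simplification is that the exponent $\alpha-a-1$ produced by the $\theta^{\alpha-1}\theta^{-1}$ gamma factor (after dividing by $\theta^{a-1}$ from $(z/\theta)^{a-1}$) equals $b-1$; this cancels the $\theta^{-(b-1)}$ arising from $(1-z/\theta)^{b-1}=\theta^{-(b-1)}(\theta-z)^{b-1}$. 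After the cancellation, the integral collapses to $\int_z^\infty (\theta-z)^{b-1}e^{-\beta\theta}\,d\theta$, which, by the substitution $u=\theta-z$, equals $e^{-\beta z}\Gamma(b)\beta^{-b}$. Tracking the remaining constants yields $f_Z(z)=\frac{\beta^{a}}{\Gamma(a)}z^{a-1}e^{-\beta z}$, i.e., $Z\sim{\rm Gamma}(a,\beta)={\rm Gamma}(q^*\alpha,\beta)$, as claimed.

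An alternative, slicker route is via Mellin transforms (fractional moments): for $s>-a$ one has $\EE{\Theta^s}=\Gamma(\alpha+s)/(\Gamma(\alpha)\beta^s)$ and $\EE{B^s}=\Gamma(a+s)\Gamma(\alpha)/(\Gamma(a)\Gamma(\alpha+s))$ (using $a+b=\alpha$), so by independence $\EE{Z^s}=\EE{\Theta^s}\,\EE{B^s}=\Gamma(a+s)/(\Gamma(a)\beta^s)$, which is exactly the $s$-th moment of a ${\rm Gamma}(a,\beta)$ law; since the gamma distribution is moment-determinate, this identifies the distribution of $Z$. I would likely keep the direct density computation as the main argument, since it is entirely elementary and self-contained, and relegate the Mellin-transform version to a remark.

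The only step requiring genuine care — and hence the ``hard part'', though it is quite mild here — is the bookkeeping of the exponents of $\theta$ in the integrand, where the relation $a+b=\alpha$ must be used to see that the $\theta$-powers cancel, leaving an elementary gamma integral; everything else is routine substitution and constant-tracking. (If one instead uses the Mellin-transform argument, the analogous point needing justification is the moment-determinacy of the gamma law, which is standard.)
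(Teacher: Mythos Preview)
Your argument is correct. The direct density computation is carried out cleanly: the change of variables $(\Theta,B)\mapsto(\Theta,Z)$ with Jacobian $\theta^{-1}$ is right, the integration limits are correct, and the crucial exponent bookkeeping --- that $\alpha-1-(a-1)-1=b-1$ cancels against the $\theta^{-(b-1)}$ coming from $(1-z/\theta)^{b-1}$ --- goes through exactly because $a+b=\alpha$. The remaining gamma integral and constant tracking are fine. The Mellin-transform alternative is also correct and, as you note, moment-determinacy of the gamma law (which follows, e.g., from the exponential tail or Carleman's condition) suffices to close the argument.

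As for comparison with the paper: the paper does not give its own proof of this lemma at all. It states the result and attributes it to \cite{lukacs1955characterization}, treating it as a classical fact (indeed, Lukacs' paper is better known for the converse characterization --- that if $U+V$ and $U/(U+V)$ are independent for positive independent $U,V$, then $U,V$ must be gamma with a common rate --- but the forward direction you prove here is the easy half). So there is no ``paper's proof'' to compare against; your write-up supplies a complete, elementary justification where the paper simply cites. Either of your two routes would be an acceptable stand-alone proof; the density computation is more self-contained, while the Mellin/moment argument is shorter and arguably more illuminating as to \emph{why} the result holds (the $\Gamma(\alpha+s)$ factors cancel precisely because the beta shape parameters sum to the gamma shape parameter).
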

As a consequence, if $\eta$ is independent of $\Theta$ and $B$, and $\eta \sim {\rm Gamma} ((q^{**}-q^*)\beta, q^{**}\beta)$ with constant $q^{**}$ such that $q^{**}\geq q^*$, then we have:
\begin{equation}\label{lem.eq.1}
 \frac{\Theta B}{q^{**}} +\eta \sim {\rm Gamma}\left( q^*\alpha+\left(q^{**}-q^{*} \right)\beta, q^{**}\beta\right).
\end{equation}
Formula \eqref{lem.eq.1} implies the following stochastic representation of the latent process $(\Theta_t)_{t\ge 1}$: 
\begin{equation}\label{parametric state space update}
\Theta_{t+1} = \frac{\Theta_{t} B_{t+1}}{q_{t}^{**}} +\eta_{t+1},
\end{equation}
where $$B_{t+1}\,|\,(Y_{1:t}, \Theta_{1:t})\sim {\rm Beta}(q_{t}^*\alpha_{t}, (1-q_{t}^*)\alpha_{t}),$$ and $$\eta_{t+1}\,|\,(Y_{1:t}, \Theta_{1:t})\sim {\rm Gamma}(
  (q_t^{**}-q_t^*)\beta_t, q_{t}^{**}\beta_{t}).$$ Moreover, $B_{t+1}$ and $\eta_{t+1}$ are conditionally independent, given $Y_{1:t}$ and $\Theta_{1:t}$.
  The observation-driven nature of the evolution in \eqref{data-driven} is evident from the evolution mechanism in
  \eqref{lem.eq.1}-\eqref{parametric state space update}, and this justifies the
  choice of \eqref{stillgamma0} by an explicit example.

\subsection{Link with other time series models}
\paragraph{Link with random coefficient AR(1) processes} We remark that conditional on the information up to time $t$ in \eqref{parametric state space update}, we have
\[
\EE {\frac{B_{t+1}}{q_{t}^{**}}\,\bigg\vert\, \Theta_{1:t}, Y_{1:t}}=\frac{q_{t}^*}{q_{t}^{**}} \leq 1.
\]
 Thus, the process $(\Theta_t)$ can be compared with a (random coefficient) auto-regressive process, see, e.g., \cite{joe1996time} and \cite{jorgensen1998stationary}, in which the first term in \eqref{parametric state space update} describes a (stochastic) thinning of the previous state $\Theta_{t}$, and $\eta_{t+1}$ adds new noise to the update. 
 Our specification of $(\Theta_t)$ differs from this random coefficient literature by the fact that we consider an endogenous, i.e., observation-driven dynamics of $(\Theta_t)$.

 \paragraph{Link with Kalman filters} It is well known in a linear Gaussian state-space model\footnote{That is, the conditional joint distribution of the pair $(Y_{t+1}, \Theta_{t+1})$ given past information $Y_{1:t}$ and $\Theta_{1:t}$ is Gaussian.} that all the conditional/filtering/predictive distributions are Gaussian; see \cite{durbin2012time}, Chapter 4. This result is based on the Gaussian-Gaussian conjugacy, as well as the closure of the Gaussian distribution to convolution and scaling. Because our model is based on the Poisson-gamma conjugacy, as well as the closure of the gamma distribution to scaling and convolution,\footnote{This holds for fixed scale parameter.} it can be viewed as a count variable analogue of the Kalman filter. More generally, state-space models based on other conjugate priors have been proposed by \cite{smith1986non}, \cite{shephard1994local}, and \cite{uhlig1997bayesian}, to name but a few.


\section{A classification according to the behavior of the variance process}\label{sec.3}

In this section, we show that Model \ref{mod.0} can allow for various forms of variance processes,
e.g.,  of increasing, decreasing, constant or stationary type.

\subsection{The static shared random effect model}\label{sec.3.2}
The model with shared (or static) random effect assumes that given a time-invariant latent variable $\Theta_t\equiv \Theta$, a.s., for all $t\ge 1$, and with $\Theta$ following a $\text{Gamma}(\beta_{1|0}, \beta_{1|0})$ distribution, the counts $Y_t$ are conditionally independent with a $\text{Pois}(\lambda_t \Theta)$ distribution.
For $\lambda_t\equiv\lambda>0$, this is a special case of the \cite{BuhlmannStraub} credibility model.

This is a static shared random effect model is obtained from  Model \ref{mod.0}
by setting
\begin{equation}\label{eq.22}q_{t}^*=q_{t}^{**}=1 \quad \Longleftrightarrow \quad  B_{t+1}\equiv 1, \ \eta_{t+1} \equiv 0.\end{equation}

\subsection{HF model: A model with increasing variance}\label{sec.3.3}
To analyze more sophisticated cases, let us first investigate some properties
of Model \ref{mod.0} concerning the first and second order moments of the processes. 

\begin{lemma}\label{lem.app.1}
  In Model \ref{mod.0} we have the following moment behaviors for $t\ge 1$
  \begin{enumerate}[$i)$]
    \item {$\EE{\alpha_{t}}=\beta_{t}$, $\EE{\Theta_{t}}=1$ and $\EE{Y_{t}}=\lambda_{t}$.}
       \item $\EE{\Var{\Theta_{t} \,|\, Y_{1:t}}}=\frac{1}{\beta_{t}}$.
    \item $\EE{\Var{\Theta_{t+1} \,|\, Y_{1:t}}}=\frac{1}{q_{t}^{**}}
    \frac{1}{\beta_{t}}$.
    \item  $\Var{\EE{\Theta_{t+1} \,|\, Y_{1:t}}}=
        \left( \frac{q_{t}^{*}}{q_{t}^{**}} \right)^2
    \Var{\EE{\Theta_{t} \,|\, Y_{1:t}}}$.
  \end{enumerate}
\end{lemma}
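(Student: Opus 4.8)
The plan is to prove the four identities in order, since each one feeds into the next. For item $i)$, I would argue by induction on $t$. At $t=1$ we have $\alpha_{1|0}=\beta_{1|0}$ by assumption, hence $\EE{\alpha_1}=\EE{\alpha_{1|0}+Y_1}=\beta_{1|0}+\lambda_1$ and $\beta_1=\beta_{1|0}+\lambda_1$ (using $\EE{Y_1}=\EE{\lambda_1\Theta_1}=\lambda_1$ since $\EE{\Theta_1}=1$). For the induction step, the recursions in Model \ref{mod.0}$iv)$ give $\EE{\alpha_{t+1|t}}=q_t^*\EE{\alpha_t}+(q_t^{**}-q_t^*)\EE{\beta_t}$, which equals $q_t^{**}\beta_t=\beta_{t+1|t}$ once $\EE{\alpha_t}=\beta_t$ is known; then $\EE{\alpha_{t+1}}=\EE{\alpha_{t+1|t}}+\EE{Y_{t+1}}$ and $\beta_{t+1}=\beta_{t+1|t}+\lambda_{t+1}$, so it remains to check $\EE{Y_{t+1}}=\lambda_{t+1}$. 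This last point follows from $\EE{Y_{t+1}}=\EE{\lambda_{t+1}\Theta_{t+1}}$ together with $\EE{\Theta_{t+1}}=\EE{\EE{\Theta_{t+1}\mid Y_{1:t}}}=\EE{\alpha_{t+1|t}/\beta_{t+1|t}}$; here I would use the gamma filtering/predictive structure and the already-established $\EE{\alpha_{t+1|t}}=\beta_{t+1|t}$ — but note $\beta_{t+1|t}=q_t^{**}\beta_t$ is \emph{deterministic} (a function of $\lambda_{1:t}$ only, not of $Y_{1:t}$), so the expectation of the ratio really is the ratio of expectations, giving $\EE{\Theta_{t+1}}=1$. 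The same observation handles $\EE{\Theta_t}=1$ directly from $\EE{\alpha_t}=\beta_t$ with $\beta_t$ deterministic.

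For item $ii)$: conditionally on $Y_{1:t}$, $\Theta_t\sim{\rm Gamma}(\alpha_t,\beta_t)$ by \eqref{eq.3400}, so $\Var{\Theta_t\mid Y_{1:t}}=\alpha_t/\beta_t^2$; taking expectations and using $\EE{\alpha_t}=\beta_t$ with $\beta_t$ deterministic yields $\EE{\Var{\Theta_t\mid Y_{1:t}}}=\EE{\alpha_t}/\beta_t^2=1/\beta_t$. For item $iii)$: by \eqref{stillgamma0}, $\Theta_{t+1}\mid Y_{1:t}\sim{\rm Gamma}(\alpha_{t+1|t},\beta_{t+1|t})$, so $\Var{\Theta_{t+1}\mid Y_{1:t}}=\alpha_{t+1|t}/\beta_{t+1|t}^2$; again $\beta_{t+1|t}=q_t^{**}\beta_t$ is deterministic, and $\EE{\alpha_{t+1|t}}=\beta_{t+1|t}=q_t^{**}\beta_t$ from item $i)$, so the expectation equals $(q_t^{**}\beta_t)/(q_t^{**}\beta_t)^2=1/(q_t^{**}\beta_t)$, which is $\frac{1}{q_t^{**}}\frac{1}{\beta_t}$.

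For item $iv)$: write $\EE{\Theta_{t+1}\mid Y_{1:t}}=\alpha_{t+1|t}/\beta_{t+1|t}$ and $\EE{\Theta_t\mid Y_{1:t}}=\alpha_t/\beta_t$, both with deterministic denominators. Then $\alpha_{t+1|t}=q_t^*\alpha_t+(q_t^{**}-q_t^*)\beta_t$ and only $\alpha_t$ is random, so $\Var{\alpha_{t+1|t}}=(q_t^*)^2\Var{\alpha_t}$, hence $\Var{\alpha_{t+1|t}/\beta_{t+1|t}}=(q_t^*)^2\Var{\alpha_t}/(q_t^{**}\beta_t)^2=(q_t^*/q_t^{**})^2\,\Var{\alpha_t/\beta_t}$, which is the claimed identity. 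The one point that needs care throughout — and the main (minor) obstacle — is keeping straight which of $\alpha_{t|t-1},\beta_{t|t-1},\alpha_t,\beta_t$ are deterministic versus random: $\beta_t$ and $\beta_{t+1|t}$ depend only on $\lambda_{1:t}$ and the observation pattern, hence are deterministic, whereas $\alpha_t$ and $\alpha_{t+1|t}$ depend on $Y_{1:t}$; once this bookkeeping is fixed, every step reduces to the gamma moment formulas and the linear recursions of Model \ref{mod.0}. I would also state explicitly that "if $Y_t$ is observed" is assumed throughout (or treat the missing-data case by noting $\EE{Y_t}$ simply drops out and $\lambda_t$ is replaced by $0$ in $\beta_t$), so the recursion $\beta_t=\beta_{t|t-1}+\lambda_t$ used above is the relevant one.
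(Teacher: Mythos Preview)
Your proposal is correct and follows essentially the same route as the paper's own proof: induction on $t$ for part $i)$ via the recursions of Model \ref{mod.0}, then direct use of the gamma moment formulas for parts $ii)$--$iv)$. Your explicit emphasis that $\beta_t$ and $\beta_{t+1|t}$ are deterministic (functions of $\lambda_{1:t}$ only) is the key bookkeeping point that the paper uses implicitly, and making it explicit is a clean way to justify pulling them outside expectations and variances.
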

\begin{proof}
See \ref{proofoflemma1}.
\end{proof}
Property $i)$ says that process $(\Theta_t)$ is mean-stationary. Properties $iii)$ and $iv)$ allow to analyze the variance behavior of $(\Theta_t)$. Indeed, by the total variance decomposition formula, we get
\begin{equation}
\label{nextvariance}
\begin{aligned}
\Var{\Theta_{t+1}}&=\EE{\Var{\Theta_{t+1} \,|\, Y_{1:t}}}+\Var{\EE{\Theta_{t+1} \,|\, Y_{1:t}}}\\
&=\frac{1}{q_{t}^{**}}
   \frac{1}{\beta_{t}}+\left( \frac{q_{t}^{*}}{q_{t}^{**}} \right)^2
    \Var{\EE{\Theta_{t} \,|\, Y_{1:t}}}.
\end{aligned}
\end{equation}
On the other hand, by using the total variance decomposition formula again, we have
\begin{align}
\label{previousvar3iance}
  \Var{\Theta_{t}}&=  \EE{\Var{\Theta_{t} \,|\, Y_{1:t}}}+
    \Var{\EE{\Theta_{t} \,|\, Y_{1:t}}}\nonumber \\
    &=\frac{1}{\beta_{t}}+ \Var{\EE{\Theta_{t} \,|\, Y_{1:t}}}.
\end{align}
Let us now compare equations  \eqref{nextvariance} and \eqref{previousvar3iance}. Because $\frac{1}{q_{t}^{**}} \geq 1$ the first term in
 \eqref{nextvariance} is larger than the first one  of \eqref{previousvar3iance}. Similarly, because $\left( \frac{q_{t}^{*}}{q_{t}^{**}} \right)^2 \leq 1$, the second term in \eqref{nextvariance} is  smaller than the second one of \eqref{previousvar3iance}.
As a result, the variance process $(\Var{\Theta_t})$ in our model is not necessarily monotone.  
Throughout the rest of this section, we will discuss cases in which this sequence of variances $(\Var{\Theta_t})$ is either increasing, time-varying, or decreasing.

\medskip

In this subsection we start by describing an increasing case.
The HF model, \cite{harvey1989time}, is obtained from Model \ref{mod.0} under the extra constraints for $t\ge 1$
\begin{equation}\label{eq.14}
q_{t}^*=q_{t}^{**}=q ~ \in~ (0,1).
\end{equation}
\cite{harvey1989time}'s original formulation is without covariates, \cite{gamerman2013non} extend their model by introducing time-varying exogenous covariates $(\lambda_t)$.

In this model, the stochastic representation \eqref{parametric state space update} becomes
\begin{equation}
\label{martingale}
\Theta_{t+1} = \frac{\Theta_{t} B_{t+1}}{q} ,
\end{equation}
which implies that $\E { \Theta_{t+1}  | Y_{1:t}}=\Theta_{t}$. In other words, $(\Theta_t)$ is a martingale with respect to the filtration generated by $(Y_{1:t})$.

The following lemma shows that under some conditions, this martingale has an explosive variance behavior. 
\begin{lemma}[Explosive variance in the HF model]
\label{lem.5}
Under the HF Model, if the exogenous process $(\lambda_{t})$ is bounded both both above and below across time $t\ge 1$, then $(\Var{\Theta_t})$ and $(\Var{Y_t})$  increase to infinity when $t$ goes to infinity. 
\end{lemma}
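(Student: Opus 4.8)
The plan is to extract from Lemma~\ref{lem.app.1} an exact recursion for the variance increments $\Var{\Theta_{t+1}}-\Var{\Theta_t}$, show that these increments are positive and bounded away from zero, and then push the resulting divergence through the Poisson measurement equation to $(Y_t)$. To begin, I would specialize the two total-variance identities \eqref{nextvariance} and \eqref{previousvar3iance} to the HF constraint $q_t^{*}=q_t^{**}=q$. Since then $\left(q_t^{*}/q_t^{**}\right)^{2}=1$, part $iv)$ of Lemma~\ref{lem.app.1} gives $\Var{\EE{\Theta_{t+1}\mid Y_{1:t}}}=\Var{\EE{\Theta_{t}\mid Y_{1:t}}}$, so subtracting \eqref{previousvar3iance} from \eqref{nextvariance} leaves the clean identity
\[
\Var{\Theta_{t+1}}-\Var{\Theta_t}\;=\;\Big(\frac1q-1\Big)\frac{1}{\beta_t}\;=\;\frac{1-q}{q}\,\frac{1}{\beta_t}\;>\;0,
\]
so $(\Var{\Theta_t})_{t\ge 1}$ is strictly increasing. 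Iterating, $\Var{\Theta_{t+1}}=\Var{\Theta_1}+\frac{1-q}{q}\sum_{s=1}^{t}\beta_s^{-1}$, and hence divergence to $+\infty$ reduces to showing $\sum_{t\ge1}\beta_t^{-1}=\infty$.

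The key step is a uniform upper bound on $\beta_t$. Here one uses that $\beta_t$ is deterministic --- it depends on $\lambda_{1:t}$ only, not on the $Y_t$'s: combining \eqref{eq42} with $\beta_{t+1\mid t}=q_t^{**}\beta_t$ from \eqref{stillgamma0} gives the linear recursion $\beta_{t+1}=q\beta_t+\lambda_{t+1}$ (taking all observations present, as in the original HF formulation), hence $\beta_t=q^{t-1}\beta_1+\sum_{j=2}^{t}q^{t-j}\lambda_j$. If $\lambda_t\le\bar\lambda$ for all $t$, summing the geometric series yields $\beta_t\le\beta_1+\bar\lambda/(1-q)=:C<\infty$, uniformly in $t$. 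Therefore $\beta_t^{-1}\ge C^{-1}$, whence $\Var{\Theta_{t+1}}\ge\Var{\Theta_1}+(1-q)t/(qC)\to\infty$ as $t\to\infty$. This is precisely where boundedness of $(\lambda_t)$ from above is needed; without it $\beta_t$ could grow geometrically and $\sum_t\beta_t^{-1}$ could converge.

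Finally, I would transfer the conclusion to $(Y_t)$ through the Poisson measurement equation \eqref{eq2}. Conditioning on $\Theta_t$ and applying the tower rule over the finer $\sigma$-field $\sigma(Y_{1:(t-1)},\Theta_{1:t})$ shows $\EE{Y_t\mid\Theta_t}=\Var{Y_t\mid\Theta_t}=\lambda_t\Theta_t$, so the total-variance formula together with $\EE{\Theta_t}=1$ (part $i)$ of Lemma~\ref{lem.app.1}) gives $\Var{Y_t}=\lambda_t+\lambda_t^{2}\,\Var{\Theta_t}$. If in addition $\lambda_t\ge\underline\lambda>0$, then $\Var{Y_t}\ge\underline\lambda^{2}\,\Var{\Theta_t}\to\infty$, which completes the argument. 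The only point requiring genuine care is the uniform upper bound on $\beta_t$; everything else is a routine consequence of Lemma~\ref{lem.app.1} and the Poisson--gamma conjugacy.
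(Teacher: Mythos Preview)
Your proposal is correct and follows essentially the same route as the paper's proof: both derive the increment identity $\Var{\Theta_{t+1}}-\Var{\Theta_t}=(1/q-1)\,\beta_t^{-1}$ from \eqref{nextvariance}--\eqref{previousvar3iance}, use the linear recursion $\beta_t=q\beta_{t-1}+\lambda_t$ together with the upper bound on $(\lambda_t)$ to cap $\beta_t$ uniformly, and then invoke the Poisson total-variance decomposition with the lower bound on $(\lambda_t)$ for the passage to $\Var{Y_t}$. Your write-up is in fact slightly more explicit about the summation form of $\Var{\Theta_{t+1}}$ and the role of each bound on $(\lambda_t)$, but there is no methodological difference.
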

\begin{proof}
See \ref{proofoflem.5}.
\end{proof}

One advantage of the HF model is that under this model, the predictive mean \eqref{predictivemean} is an exponentially weighted moving average (EWMA) of past observations. The literature on exponential moving average forecasting of counts has traditionally focused on \textit{non-stationary} models only;  see, e.g., \cite{hyndman2008forecasting}, Chapter 17. It is seen in the next subsection, however, that it is not the only possible specification leading to EWMA predictors. 

\subsection{A model with converging variance}\label{sec.3.4}
In this subsection, we discuss a special case of Model \ref{mod.0} which is asymptotically strongly stationary. By \textit{strongly} stationarity, we mean that the conditional distribution of $Y_{t+h}$, given $Y_t$, converges to a non-degenerate distribution. In other words, in the long run, the process $(Y_t)$ evolves in a ``steady state", i.e., an ``equilibrium". This implies,  in particular, that the variance and mean of the process converge to positive constants.\footnote{However, the strong stationarity is stronger than the second order convergence of the variance and mean of the process. Indeed, the stationarity implies the convergence of any moments, whenever they exist.} Note also that stationarity does not mean ``time-invariant". Indeed, for large time horizons $h$, the variance of $Y_{t+h}$ converges to a positive constant.

Specifically, we assume, in Model \ref{mod.0}, that $q_t^*, q_t^{**}$ and $\lambda_t$ are all time-invariant
for $t\ge 1$
  \begin{equation}
  \label{asymptoticallystationary}
q_t^*=pq, \qquad q_t^{**}=q,
\end{equation}
for $p, q \in(0,1)$, and
\begin{equation}
    \label{lambdaalsoconstant}
 \lambda_t=\lambda. 
\end{equation}
Then, we have for $t\ge 1$
\begin{equation} \label{param converging}
  \begin{aligned}
  \alpha_t&=\alpha_{t|t-1}+Y_t=
  p q\alpha_{t-1}+q\left( 1-p\right)\beta_{t-1}+Y_t, \\
  \beta_t&=\beta_{t|t-1}+\lambda_t=
           q\beta_{t-1}+\lambda_t,
           \end{aligned}
\end{equation}
where the second identities need $t\ge 2$.
Because our model has as predictive distribution a NB distribution, with the number of trial parameter satisfying a linear recursion, this model coincides with the NB-INGARCH(1,1) model, see 
\citet{gonccalves2015infinitely}, who established its stationarity.
\begin{lemma}[\citet{gonccalves2015infinitely}]
\label{lemma zzz}  
If in Model \ref{mod.0} the parameters satisfy \eqref{asymptoticallystationary} and \eqref{lambdaalsoconstant}, 
then the process $(Y_t)$ is asymptotically strongly stationary. In particular, $\left(\Var{\Theta_t}\right)$ and $\left(\Var{Y_t}\right)$ converge.
\end{lemma}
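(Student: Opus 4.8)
The plan is to rewrite the model, under \eqref{asymptoticallystationary}--\eqref{lambdaalsoconstant}, in predictive-mean form, recognise it as an NB-INGARCH$(1,1)$ recursion in the sense of \citet{gonccalves2015infinitely}, and transfer their stationarity result after disposing of a purely deterministic transient. First, by \eqref{param converging}, once all $Y_t$ are observed the sequence $(\beta_t)$ is \emph{deterministic} and satisfies $\beta_t=q\beta_{t-1}+\lambda$ with $\beta_1=\beta_{1|0}+\lambda$; since $q\in(0,1)$ it converges geometrically (rate $q$) to $\beta_\infty:=\lambda/(1-q)$, and hence $\beta_{t|t-1}=q\beta_{t-1}\to q\beta_\infty$. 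Next, set $\mu_t:=\EE{Y_t\mid Y_{1:(t-1)}}=\lambda\alpha_{t|t-1}/\beta_{t|t-1}$, see \eqref{predictivemean}. Substituting into this the recursions \eqref{param converging} for $\alpha_{t|t-1}$, the update $\alpha_{t-1}=\alpha_{t-1|t-2}+Y_{t-1}$ from \eqref{eq41}, and $\beta_{t|t-1}=q\beta_{t-1}$, and using $\alpha_{t-1|t-2}=\mu_{t-1}\beta_{t-1|t-2}/\lambda$, a short computation gives, for $t\ge2$,
\[
\mu_t=\omega+a_{t-1}\mu_{t-1}+b_{t-1}Y_{t-1},\qquad \omega:=\lambda(1-p),\quad a_{t-1}:=\frac{p\,\beta_{t-1|t-2}}{\beta_{t-1}},\quad b_{t-1}:=\frac{\lambda p}{\beta_{t-1}},
\]
where, crucially, $a_{t-1}+b_{t-1}=p(\beta_{t-1|t-2}+\lambda)/\beta_{t-1}=p<1$ by \eqref{eq42}, while $a_{t-1}\to pq$ and $b_{t-1}\to p(1-q)$. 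Moreover, by \eqref{NB log-likelihood sequence}, $Y_t\mid Y_{1:(t-1)}\sim{\rm NB}(\mu_t,\alpha_{t|t-1})$ with $\alpha_{t|t-1}=\mu_t q\beta_{t-1}/\lambda$, so the NB size parameter is a deterministic, convergent multiple of $\mu_t$ and $\Var{Y_t\mid Y_{1:(t-1)}}=\mu_t\bigl(1+\lambda/(q\beta_{t-1})\bigr)\to\mu_t/q$ is \emph{linear} in the conditional mean. Thus $(Y_t)$ is a time-inhomogeneous NB-INGARCH$(1,1)$ whose coefficients converge geometrically to those of the homogeneous NB-INGARCH$(1,1)$ with parameters $(\omega,pq,p(1-q))$.

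Since $pq+p(1-q)=p<1$, \citet{gonccalves2015infinitely} show that this limiting homogeneous process admits a unique strictly stationary and ergodic solution $(\widetilde Y_t)$, to which the process started from any fixed initialisation converges in distribution. To dispose of the transient caused by a generic prior rate $\beta_{1|0}$, I would couple $(Y_t)$ and $(\widetilde Y_t)$ on a common probability space and estimate their conditional mean processes through the contraction
\[
\EE{\,|\mu_t-\widetilde\mu_t|\,}\ \le\ p\,\EE{\,|\mu_{t-1}-\widetilde\mu_{t-1}|\,}+C\,q^{\,t},
\]
which follows from $a_{t-1}+b_{t-1}=p$, from $|\beta_t-\beta_\infty|=q^{\,t-1}|\beta_1-\beta_\infty|$, and from $\EE{Y_{t-1}}=\lambda$ together with the boundedness of $\EE{\mu_{t-1}}$; iterating yields $\EE{|\mu_t-\widetilde\mu_t|}\to0$. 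Since the conditional law of $Y_{t+h}$ given $Y_t$ is ${\rm NB}$ with parameters that are Lipschitz functions of the mean process and $\beta_{t+h\mid t+h-1}\to q\beta_\infty$, this conditional law converges in total variation to the $t$-invariant, non-degenerate conditional law of $\widetilde Y_{t+h}$ given $\widetilde Y_t$, which is precisely the asserted asymptotic strong stationarity.

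For the convergence of the variances, apply the total variance formula to the Poisson measurement equation \eqref{eq2} together with $\EE{\Theta_t}=1$ (Lemma~\ref{lem.app.1}, part $i)$) to obtain the exact identity $\Var{Y_t}=\lambda+\lambda^2\Var{\Theta_t}$, so the two sequences converge (or not) together. Because the conditional variance of the limiting process is \emph{linear} in its mean, the second-moment recursion of $(\widetilde\mu_t)$ has leading coefficient $(a+b)^2=p^2<1$; hence the second-order condition of \citet{gonccalves2015infinitely} holds automatically, $\Var{\widetilde Y_t}$ is a finite constant, and strengthening the coupling bound above to an $L^2$ estimate on $\EE{(\mu_t-\widetilde\mu_t)^2}$ (the same contraction, now with summable remainder, again using $p^2<1$ and the geometric decay of $\beta_t-\beta_\infty$) gives $\Var{Y_t}\to\Var{\widetilde Y_1}$ and therefore the convergence of $\Var{\Theta_t}$.

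The only genuinely non-routine ingredient is this last one in spirit: the removal of the deterministic inhomogeneity coming from a generic $\beta_{1|0}$, i.e.\ the coupling estimate and its $L^2$ reinforcement; everything else is bookkeeping with the explicit recursions. In fact, if one additionally imposes $\beta_{1|0}=q\lambda/(1-q)$ (still compatible with $\alpha_{1|0}=\beta_{1|0}$ and $\EE{\Theta_1}=1$), then $\beta_t\equiv\beta_\infty$ for all $t$ and $(Y_t)$ is \emph{exactly} a homogeneous NB-INGARCH$(1,1)$ from $t=1$, and the lemma is immediate from \citet{gonccalves2015infinitely}.
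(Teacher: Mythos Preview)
The paper does not give its own proof of this lemma: it simply observes, in the sentence immediately preceding the statement, that under \eqref{asymptoticallystationary}--\eqref{lambdaalsoconstant} ``this model coincides with the NB-INGARCH(1,1) model'' and then attributes the stationarity to \citet{gonccalves2015infinitely}. Your proposal follows the same route---identify the predictive-mean recursion as an NB-INGARCH$(1,1)$ and transfer their result---so the approaches agree.

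Where you go further than the paper is in treating the transient honestly. The paper's one-line identification glosses over the fact that, for a generic prior rate $\beta_{1|0}$, the coefficients $a_{t-1},b_{t-1}$ and the NB size-to-mean ratio are time-varying (though convergent), so the model is only \emph{asymptotically} the homogeneous NB-INGARCH$(1,1)$. Your coupling/contraction argument, exploiting $a_{t-1}+b_{t-1}\equiv p<1$ together with the geometric decay $|\beta_t-\beta_\infty|=q^{t-1}|\beta_1-\beta_\infty|$, is the natural way to close this gap, and your final observation that the special initialisation $\beta_{1|0}=q\lambda/(1-q)$ makes the process \emph{exactly} homogeneous from $t=1$ is a clean way to see why the paper's informal identification is morally right. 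In short: your proposal is correct and, on the initial-condition issue, more careful than the paper itself.
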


Under assumptions \eqref{asymptoticallystationary} and \eqref{lambdaalsoconstant}, the predictive mean \eqref{predictivemean} is once again an EWMA of the past observations. In other words, it is an extension of the standard EWMA forecasting literature; see  \cite{hyndman2008forecasting}. 

\begin{remark}\normalfont
There are also other stationary count process models allowing for EWMA predictive mean formulas, such as the INGARCH(1,1), see \cite{ferland2006integer},  or the NB-INGARCH, see \cite{zhu2011negative}. These other models, however, do not admit a state-space representation and, therefore, do not possess the three properties mentioned in the introduction. In other words, among all the stationary count models with an EWMA predictive mean, the model of \cite{gonccalves2015infinitely} has the advantage of having a state-space representation.
\end{remark}

\subsection{A model with decreasing variance}\label{sec.4.1}
In this subsection, we consider the special case of Model \ref{mod.0} with the constraint for $t\ge 1$
\begin{equation}
\label{condition2}
q_t^*=p ~\in [0,1) \quad \hbox{and}\quad q_t^{**}=1.
\end{equation}
This implies for $t\ge 2$, see \eqref{param converging},
$$
\alpha_t=p \alpha_{t-1}+ (1-p)\beta_{t-1}+Y_t \quad \hbox{and}\quad \beta_t=\beta_{t-1}+\lambda_t.
$$
If $(\lambda_t)$ is bounded from below, then both processes $(\beta_t)$ and $(\alpha_t)$ go to infinity when $t$ increases
to infinity.

Let us study the variance of this process. By comparing \eqref{nextvariance} and \eqref{previousvar3iance} with the condition in \eqref{condition2}, we get
\begin{equation}
\label{varianceisdecreasing}
    \Var{\Theta_{t+1}}-\Var{\Theta_{t}} =\left({p}^2-1\right)\Var{ \EE{\Theta_{t}\,|\, Y_{1:t}}}=\left({p}^2-1\right)\Var{ \frac{\alpha_{t}}{\beta_{t}}} <0.
    \end{equation}
    Thus, the latent process $(\Theta_t)$, and hence $(Y_t)$,  have both a decreasing variance (for the latter it is sufficient to assume that $(\lambda_t)$ is bounded). Moreover, we have the following stronger result.
\begin{lemma}\label{lem.3}
  Under Model 1 and constraint \eqref{condition2}, if the exogenous process $(\lambda_t)$ is bounded both from below and above by positive constants, then $(\Var{\Theta_t})$ converges to zero, when $t$ goes to infinity, and $(\Var{Y_t})$ converges to
$\lambda$, under the additional assumption \eqref{lambdaalsoconstant}.
\end{lemma}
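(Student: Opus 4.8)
The plan is to combine the two exact variance identities already derived, namely \eqref{varianceisdecreasing} and \eqref{previousvar3iance}. Write $v_t := \Var{\Theta_t}$ and $w_t := \Var{\EE{\Theta_t\,|\,Y_{1:t}}}=\Var{\alpha_t/\beta_t}\ge 0$. By \eqref{varianceisdecreasing} we have $v_{t+1}-v_t = (p^2-1)\,w_t\le 0$, so $(v_t)$ is non-increasing; being bounded below by $0$, it converges, and hence $v_{t+1}-v_t\to 0$. Since constraint \eqref{condition2} forces $p\in[0,1)$, the factor $p^2-1$ lies in $[-1,0)$ and is bounded away from $0$, so $w_t = (v_{t+1}-v_t)/(p^2-1)\to 0$.

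It then remains only to identify the limit of $(v_t)$. The total-variance decomposition \eqref{previousvar3iance}, combined with Lemma~\ref{lem.app.1}, property $ii)$, yields the exact identity $v_t = 1/\beta_t + w_t$ for all $t\ge 1$. Under \eqref{condition2} we have $q_t^{**}=1$, hence $\beta_{t+1} = q_t^{**}\beta_t + \lambda_{t+1} = \beta_t + \lambda_{t+1}$; since $(q_t^{**})$ and $(\lambda_t)$ are exogenous, $(\beta_t)$ is deterministic, explicitly $\beta_t = \beta_{1|0}+\sum_{s=1}^t\lambda_s$. The hypothesis that $(\lambda_t)$ is bounded below by a positive constant then gives $\beta_t\to\infty$, so $1/\beta_t\to 0$, and combining with $w_t\to 0$ we conclude $v_t\to 0$, i.e.\ $\Var{\Theta_t}\to 0$. (Only the lower bound on $(\lambda_t)$ enters here; the upper bound is not needed for the $(\Theta_t)$ conclusion.)

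For $(\Var{Y_t})$ I would invoke the conditional Poisson structure \eqref{eq2}: conditionally on $\sigma(\Theta_{1:t},Y_{1:(t-1)})$, both the mean and the variance of $Y_t$ equal $\lambda_t\Theta_t$. Taking the total-variance decomposition with respect to this $\sigma$-field and using $\EE{\Theta_t}=1$ from Lemma~\ref{lem.app.1}, property $i)$, gives
\[
\Var{Y_t}=\EE{\lambda_t\Theta_t}+\Var{\lambda_t\Theta_t}=\lambda_t+\lambda_t^2\,\Var{\Theta_t}.
\]
Under the additional assumption \eqref{lambdaalsoconstant} this reads $\Var{Y_t}=\lambda+\lambda^2 v_t$, which tends to $\lambda$ by the first part.

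The argument is short once \eqref{varianceisdecreasing}, \eqref{previousvar3iance} and Lemma~\ref{lem.app.1} are in hand, and there is no serious obstacle; the one step deserving a moment's care is the implication ``$v_{t+1}-v_t\to 0\ \Rightarrow\ w_t\to 0$'', which rests on the monotonicity and boundedness of $(v_t)$ together with $p<1$. The borderline value $p=1$ makes $p^2-1$ vanish and the argument collapses, which is consistent with the fact that $p=1$ is precisely the static shared random effect model of Section~\ref{sec.3.2}, where $\Var{\Theta_t}$ stays constant.
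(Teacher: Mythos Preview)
Your proposal is correct and follows essentially the same route as the paper's own proof: monotone convergence of $(v_t)$ forces the increments to vanish, hence $w_t\to 0$ via \eqref{varianceisdecreasing}; then \eqref{previousvar3iance} together with $\beta_t\to\infty$ gives $v_t\to 0$, and the $\Var{Y_t}$ statement follows from the Poisson total-variance identity. Your write-up is in fact slightly cleaner in its notation and in making the divergence of $\beta_t$ explicit.
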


\begin{proof}
See \ref{proofoflem.3}.
\end{proof}

\subsection{A model with constant variance} \label{sec.4.2}
In this subsection, we discuss a special case of Model \ref{mod.0} for which the variance is a constant.
Note that the model in \citet{gonccalves2015infinitely}, which is the model in Section \ref{sec.3.4}, requires $(\lambda_t)$ to be time-invariant \eqref{lambdaalsoconstant}, which might be too restrictive for insurance applications. Moreover, the variance is not time-invariant in Lemma \ref{lemma zzz}, but it only converges to a constant at infinity. In the following, we look at another  type of stationarity property, by looking only at the variance of the process $(\Theta_t)$.\footnote{Note that the process $(\Theta_t)$ has a constant mean by Lemma \ref{lem.1}.} However, instead of requiring the variance to converge to a constant when time $t$ goes to infinity, we request it to remain constant for any $t$,\footnote{In particular, we do not require the process to be covariance stationary. That is, the covariance function of the process ${\mathrm Cov}(\Theta_t, \Theta_{t-h})$ can depend on $t$.} that is, for $t\ge 1$
   \begin{equation}\label{eq.sta.1}
  1=\EE{\Theta_t}\quad\hbox{and}\quad \frac{1}{\beta_{1|0}}=\Var{\Theta_t}.
  \end{equation}
This will in turn allow us to relax the time-invariance assumption on $(\lambda_t)$. More precisely, by comparing \eqref{nextvariance} and \eqref{previousvar3iance}, we get immediately the following result.
\begin{lemma}
In Model \ref{mod.0}, the variance process $(\Var{\Theta_t})$ is constant, if and only if $q^*_{t}$ and $q^{**}_{t}$ satisfy the following equation for all $t\ge 1$
 \begin{equation}
   \label{constantvariance}
   \frac{1}{ q^{**}_{t}}
 \frac{1}{\beta_t}+ \left(\frac{q^*_{t}}{q^{**}_{t}}\right)^2 \left(\frac{1}{\beta_{1|0}}-\frac{1}{\beta_{t}}\right)=\frac{1}{\beta_{1|0}}.
 \end{equation}
\end{lemma}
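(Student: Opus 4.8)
The plan is to regard $v_t:=\Var{\Theta_t}$ as a \emph{deterministic} scalar sequence obeying an explicit one‑step recursion, and to observe that ``$v_t$ constant'' is equivalent to the displayed fixed‑point identity holding at every $t$. First I would record the initial value: by \eqref{eq1} with $\alpha_{1|0}=\beta_{1|0}$, the law of $\Theta_1$ is ${\rm Gamma}(\beta_{1|0},\beta_{1|0})$, so $v_1=\Var{\Theta_1}=1/\beta_{1|0}$. I would also note that $\beta_t$ is a deterministic function of the exogenous inputs $\lambda_{1:t}$ and $q^{**}_{1:(t-1)}$, via \eqref{eq42} together with $\beta_{t|t-1}=q^{**}_{t-1}\beta_{t-1}$; hence every quantity appearing in \eqref{constantvariance} is deterministic and the condition is a genuine constraint on $(q^*_t,q^{**}_t)$.

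Next I would combine the two variance identities already established in the text. From \eqref{previousvar3iance} one has $\Var{\EE{\Theta_t|Y_{1:t}}}=v_t-1/\beta_t$; substituting this into \eqref{nextvariance} gives the recursion
$$
v_{t+1} \;=\; \frac{1}{q^{**}_t\,\beta_t} \;+\; \left(\frac{q^{*}_t}{q^{**}_t}\right)^{2}\!\left(v_t-\frac{1}{\beta_t}\right), \qquad t\ge 1,
$$
with $v_1=1/\beta_{1|0}$. The sequence $(v_t)$ is constant if and only if $v_{t+1}=v_t$ for every $t\ge 1$, which, propagating forward from $v_1$, is the same as $v_t=1/\beta_{1|0}$ for all $t\ge 1$.

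For the ``only if'' direction I would assume $v_t\equiv 1/\beta_{1|0}$, substitute $v_t=v_{t+1}=1/\beta_{1|0}$ into the recursion, and rearrange; the result is precisely \eqref{constantvariance}. For the ``if'' direction I would argue by induction on $t$: $v_1=1/\beta_{1|0}$ by the initial distribution, and if $v_t=1/\beta_{1|0}$ then the recursion together with \eqref{constantvariance} at time $t$ yields $v_{t+1}=1/\beta_{1|0}$; hence $(v_t)$ is constant. There is essentially no obstacle here — the computation is a one‑line rearrangement of \eqref{nextvariance}–\eqref{previousvar3iance}. The only points that merit a line of care are that ``constant'' must be read as ``equal to the $t=1$ value $1/\beta_{1|0}$'', and that $\beta_t$ is deterministic so that \eqref{constantvariance} is meaningful as stated; both follow directly from Model \ref{mod.0}.
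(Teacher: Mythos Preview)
Your proposal is correct and follows exactly the route the paper indicates: it says the lemma follows ``immediately'' by comparing \eqref{nextvariance} and \eqref{previousvar3iance}, and you have simply written out that comparison, substituting $\Var{\EE{\Theta_t\mid Y_{1:t}}}=v_t-1/\beta_t$ from \eqref{previousvar3iance} into \eqref{nextvariance} to obtain the one-step recursion for $v_t$, and then reading off the fixed-point condition. Your added remarks that $v_1=1/\beta_{1|0}$ fixes the constant value and that $\beta_t$ is deterministic are the right points of care, and the induction for the ``if'' direction makes explicit what the paper leaves implicit.
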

Thus, there are infinitely many possible combinations of $q^*_{t}$ and $q^{**}_{t}$ in order for $\Var{\Theta_t}$ to remain constant.
Among such choices, motivated by the conditional linear auto-regressive structure (CLAR(1)) in \cite{ar1}, we may assume the following updating rule
  \begin{equation}\label{eq.up.1}
  \EE{\Theta_{t+1}\,\vert\, Y_{1:t}}=p\EE{\Theta_{t}\,\vert\, Y_{1:t}}+1-p, 
  \end{equation} for some $p\in(0,1)$, which is equivalent to the condition
 \begin{equation}
 \label{eq.24}
  \frac{q^*_{t}}{q^{**}_{t}}=p.
  \end{equation}
  Then, under this additional assumption \eqref{eq.up.1}, requirement \eqref{constantvariance} becomes
  for $t\ge 1$
 \begin{equation}\label{eq.up.3}
q_{t}^{**}=
\frac{\beta_{1|0}}{ p^2\beta_{1|0}+\left(1-p^2\right)\beta_t},
\end{equation}
which can be calculated recursively from $\beta_t=q^{**}_{t-1}\beta_{t-1}+\lambda_t$, $t \ge 2$,
with initialization $\beta_1=\beta_{1|0}+\lambda_1$.

\subsection{A model with bounded variance}
For some applications, the assumption of a time-invariant $(\lambda_t)$ imposed in the previous subsection might be too restrictive. In this subsection, we relax this assumption, and consider a model satisfying \eqref{asymptoticallystationary} only, but not \eqref{lambdaalsoconstant}. Then, we get the following result.
\begin{lemma}
\label{lmm5bis}
    If in Model 1, \eqref{asymptoticallystationary} is satisfied, and if the process $(\lambda_t)$ is bounded from both above and below by positive constants, then the variance process $(\Var{Y_t})$ is bounded from above. 
\end{lemma}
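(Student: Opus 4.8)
The plan is to reduce the statement about $(\Var{Y_t})$ to a statement about $(\Var{\Theta_t})$, and then to control the latter via the scalar recursion hidden in \eqref{nextvariance}--\eqref{previousvar3iance}. For the reduction, I would condition on $\Theta_t$ alone: from the measurement equation \eqref{eq2} and the tower property, $\EE{Y_t\mid\Theta_t}=\lambda_t\Theta_t$ and $\Var{Y_t\mid\Theta_t}=\lambda_t\Theta_t$, so the total variance decomposition together with $\EE{\Theta_t}=1$ (Lemma \ref{lem.app.1} $i)$) yields $\Var{Y_t}=\lambda_t+\lambda_t^2\,\Var{\Theta_t}$. Hence, since $(\lambda_t)$ is bounded above by some $\bar\lambda$, it suffices to prove that $(\Var{\Theta_t})$ is bounded above.

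Next I would record the behaviour of $\beta_t$ under \eqref{asymptoticallystationary}. With $q_t^{**}=q$, the update rule gives the deterministic recursion $\beta_t=q\,\beta_{t-1}+\lambda_t$ for $t\ge 2$, with $\beta_1=\beta_{1|0}+\lambda_1$; in particular, because $\beta_{t|t-1}=q\beta_{t-1}>0$ and $\lambda_t$ is bounded below by some $\underline{\lambda}>0$, we get $\beta_t\ge\lambda_t\ge\underline{\lambda}$, hence $1/\beta_t\le 1/\underline{\lambda}$ for all $t\ge 1$. (If missing observations are allowed, the same bound holds provided the exogenous missingness pattern leaves no arbitrarily long gaps; otherwise one restricts to the fully observed case, as elsewhere in the paper.)

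Then I would derive the one-dimensional recursion for $v_t:=\Var{\Theta_t}$. Writing $w_t:=\Var{\EE{\Theta_t\mid Y_{1:t}}}$, equation \eqref{previousvar3iance} reads $v_t=1/\beta_t+w_t$ (using that $\beta_t$ is deterministic, so $\EE{\Var{\Theta_t\mid Y_{1:t}}}=1/\beta_t$), while \eqref{nextvariance} with $q_t^{**}=q$ and $q_t^{*}/q_t^{**}=p$ reads $v_{t+1}=\tfrac1q\tfrac1{\beta_t}+p^2 w_t$. Eliminating $w_t$ gives
\[
v_{t+1}=p^2\,v_t+\Big(\tfrac1q-p^2\Big)\tfrac1{\beta_t},\qquad t\ge 1,
\]
and $\tfrac1q-p^2>0$ since $1/q>1>p^2$. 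Combining $1/\beta_t\le 1/\underline{\lambda}$ with $p^2\in(0,1)$, the elementary estimate for affine recursions yields $v_t\le\max\bigl\{v_1,\ \tfrac{1/q-p^2}{\underline{\lambda}\,(1-p^2)}\bigr\}$ for every $t$, where $v_1=\Var{\Theta_1}=1/\beta_{1|0}<\infty$. Therefore $(\Var{\Theta_t})$ is bounded, and $\Var{Y_t}\le\bar\lambda+\bar\lambda^2\sup_{t\ge 1}\Var{\Theta_t}<\infty$.

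I do not anticipate a genuine obstacle. The two points needing a little care are: $(i)$ obtaining $\Var{Y_t\mid\Theta_t}=\lambda_t\Theta_t$, which requires the tower property rather than the Poisson variance identity applied directly, since \eqref{eq2} conditions on a finer $\sigma$-field than $\sigma(\Theta_t)$; and $(ii)$ the uniform lower bound on $\beta_t$, which is exactly where the assumed lower bound on $(\lambda_t)$ enters — the upper bound being used only at the very last step.
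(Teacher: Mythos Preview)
Your argument is correct and, in fact, cleaner than the paper's. Both proofs start from the same reduction
\[
\Var{Y_t}=\lambda_t+\lambda_t^{2}\,\Var{\Theta_t},
\]
so the task is to bound $\Var{\Theta_t}$. From there the paper takes a longer road: it writes $\Var{\Theta_t}$ in terms of $\Var{\alpha_t/\beta_t}$ and then derives a recursion for $\Var{\alpha_t}$ by expanding $\Var{pq\,\alpha_{t-1}+Y_t}$ with the law of total variance conditioned on $Y_{1:(t-1)}$, plugging in the explicit NB mean and variance of $Y_t\mid Y_{1:(t-1)}$, and finally obtaining $\Var{\alpha_t/\beta_t}=p^{2}\Var{\alpha_{t-1}/\beta_{t-1}}+\text{bounded}$. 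You bypass all of this by eliminating $w_t=\Var{\EE{\Theta_t\mid Y_{1:t}}}$ directly from the already-established identities \eqref{nextvariance} and \eqref{previousvar3iance}, arriving immediately at the scalar affine recursion $v_{t+1}=p^{2}v_t+(1/q-p^{2})/\beta_t$. This is shorter and reuses the paper's own Lemma~\ref{lem.app.1} machinery instead of recomputing NB moments; the paper's route, on the other hand, makes the dependence on the $(\alpha_t)$ dynamics and the NB structure more visible. Your handling of the two delicate points---the tower argument for $\Var{Y_t\mid\Theta_t}$ and the lower bound $\beta_t\ge\underline{\lambda}$ coming from $\beta_t=q\beta_{t-1}+\lambda_t\ge\lambda_t$---is accurate; note that the paper also invokes an upper bound on $\beta_t$, which you do not need.
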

\begin{proof}
See \ref{proofoflem5bis}.
\end{proof}

\subsection{A typology of models according to the variance process}
To summarize, the following table lists all the different models considered in this section. 
\begin{table}[H] \centering
\begin{tabular}{|c|c|c|c|c|c|}\hline
Model & Condition \\ \hline\hline
    Shared random effect  & $q_{t}^*=q_{t}^{**}=1$ \\ \hline
  HF model with increasing (explosive) variance &  $q_{t}^*=q_{t}^{**}=q$, $(\lambda_t)$ bounded \\ \hline
            Decreasing variance & $q_t^{*}=p, \ q_t^{**}=1$, $(\lambda_t)$ bounded\\  \hline
    Converging variance &  $q_t^*=pq, \ q_t^{**}=q,  \ \lambda_t=\lambda$ \\ \hline
    Bounded variance & $q_t^*=pq, \ q_t^{**}=q$, $(\lambda_t)$ bounded \\ \hline
     Constant variance & Eq. \eqref{constantvariance} \\ \hline

\end{tabular}
\caption{Typology of various special cases of Model 1 according to the long-run behavior of the variance process. All the constants $p$ and $q$ lie strictly between 0 and 1 in this table. By ``$(\lambda_t)$ bounded", we mean that it is both upper and lower bounded by positive constants. }
\label{tab:typology}
\end{table}

Because of Equation \eqref{predictivemean}, the models  with a constant $q_{t}^*$ in this table yield a prediction of $Y_{t+1}$ as a exponential moving average of past observations $Y_1,\ldots, Y_t$. In this regard, Model 1 broadens the scope of exponential smoothing based forecasting methods. Indeed, when it comes to count data, their focus was predominantly on the HF model, specifically in the context of count data;  see \cite{hyndman2008forecasting}, Chapter 16.


\section{Numerical illustration}
\label{sec.4}
In this section, we simulate trajectories of the various examples considered in Section \ref{sec.3}, and we illustrate the differences in long-term behavior between them. Throughout all examples, we set $\alpha_{1|0}=\beta_{1|0}=3$, and we let $\lambda_t=1$ to simulate 5,000 independent trajectories of $(\Theta_t)$ for $t=1,\ldots, T=50$, under each of the following specifications on the dynamics of $(\Theta_t)$:
\begin{itemize}
    \item Increasing variance of $(\Theta_t)$ (2nd row of Table \ref{tab:typology}): $q_t^*=q_t^{**}=0.8$,
    \item Decreasing variance of $(\Theta_t)$ (3rd row of Table \ref{tab:typology}): $q_t^*=0.8, \ q_t^{**}=1$,
    \item Converging variance 
       of $(\Theta_t)$ (4th row of Table \ref{tab:typology}): $q_t^*=0.8, \ q_t^{**}=0.9$,
    \item Constant variance of $(\Theta_t)$ (6th row of Table \ref{tab:typology}): $q_t^*=0.9q_t^{**}, \ q_t^{**}=\frac{\alpha_0}{\alpha_0 \cdot 0.9^2 +(1-0.9^2)\beta_t}=\frac{3}{2.43+0.19\beta_t}$.
\end{itemize}

\subsection{Long-run behavior of $(\Theta_t)$}
For each of the four models above, we display, in Figure \ref{Figure 1}, four independent paths over $T=50$ time periods. We observe in the northwest panel that the magnitude of the variation of $(\Theta_t)$ becomes bigger over time, reflecting an increasing variance process. Moreover, the trajectories are highly persistent, which echos the martingale property \eqref{martingale} in the HF model. In the northeast panel, all trajectories of $(\Theta_t)$ tend to vary less and less over time, which is consistent with the decreasing variance specification. Moreover, all the trajectories fluctuate around one positive value, which is expected because of the constant mean property of the process. In the model with constant variance, it is observed that the fluctuation level of $(\Theta_t)$ is stable over time compared to the other scenarios. Lastly, it is shown that the fluctuation level of $(\Theta_t)$ in the converging variance case is between the fluctuation levels in the constant and decreasing cases.

\begin{figure}[H]
    \centering
    \includegraphics[width=1\textwidth]{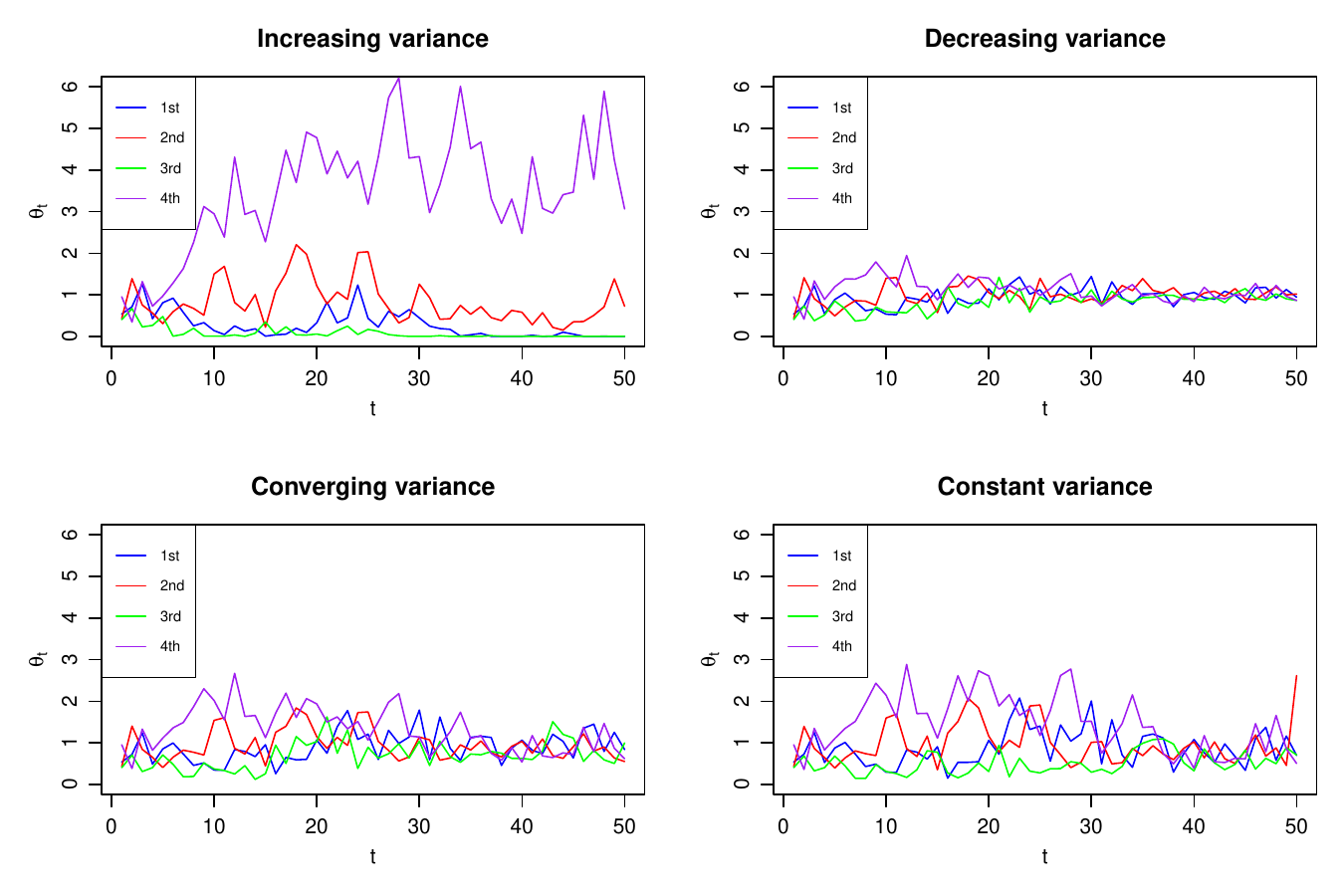}
    	\vspace{-35pt}
     \caption{Four independent trajectories of $(\Theta_t)$
 under each of the four specifications. Northwest panel: the model with increasing variance. Northeast panel: the model with decreasing variance. Southwest panel: the model with converging variance. Southeast panel: the model with constant variance. }
 \label{Figure 1}
\end{figure}

\subsection{Variance of $(\Theta_t)$}
For each of the above four models, we plot the empirical density plots of $\Theta_t$ at different times $t=1,5,20,50$, where each time series $(\Theta_t)$ is simulated 5,000 times. From Figure \ref{Figure 3} we observe the following:
\begin{itemize}
\item In the HF model with increasing variance, the distribution of $\Theta_{50}$ has both a thicker right tail, and a much higher peak near zero compared to the distributions of $\Theta_1$ and $\Theta_5$. This reflects the increasing variance of $(\Theta_t)$ over time under a constant mean, as shown in Figure \ref{Figure 3}.
    \item On the other hand, in the model with decreasing variance, the distribution of $\Theta_{50}$ is much more concentrated around the mean value 1 compared to those of $\Theta_1$ and $\Theta_5$, as the variance of $(\Theta_t)$ decreases over time.
    \item In the model with converging variance, we observe that the distribution of $\Theta_t$ becomes more concentrated as $t$ increases. Moreover, the distribution of $\Theta_{50}$ is significantly different from, say, $\Theta_{20}$.
    \item In the model with constant variance, the distributions of $\Theta_1$, $\Theta_5$, $\Theta_{20}$, and $\Theta_{50}$ are quite close, which reflects the fact they have the same mean and variance.
\end{itemize}

\begin{figure}[H]
    \centering
   \includegraphics[width=0.95\textwidth]{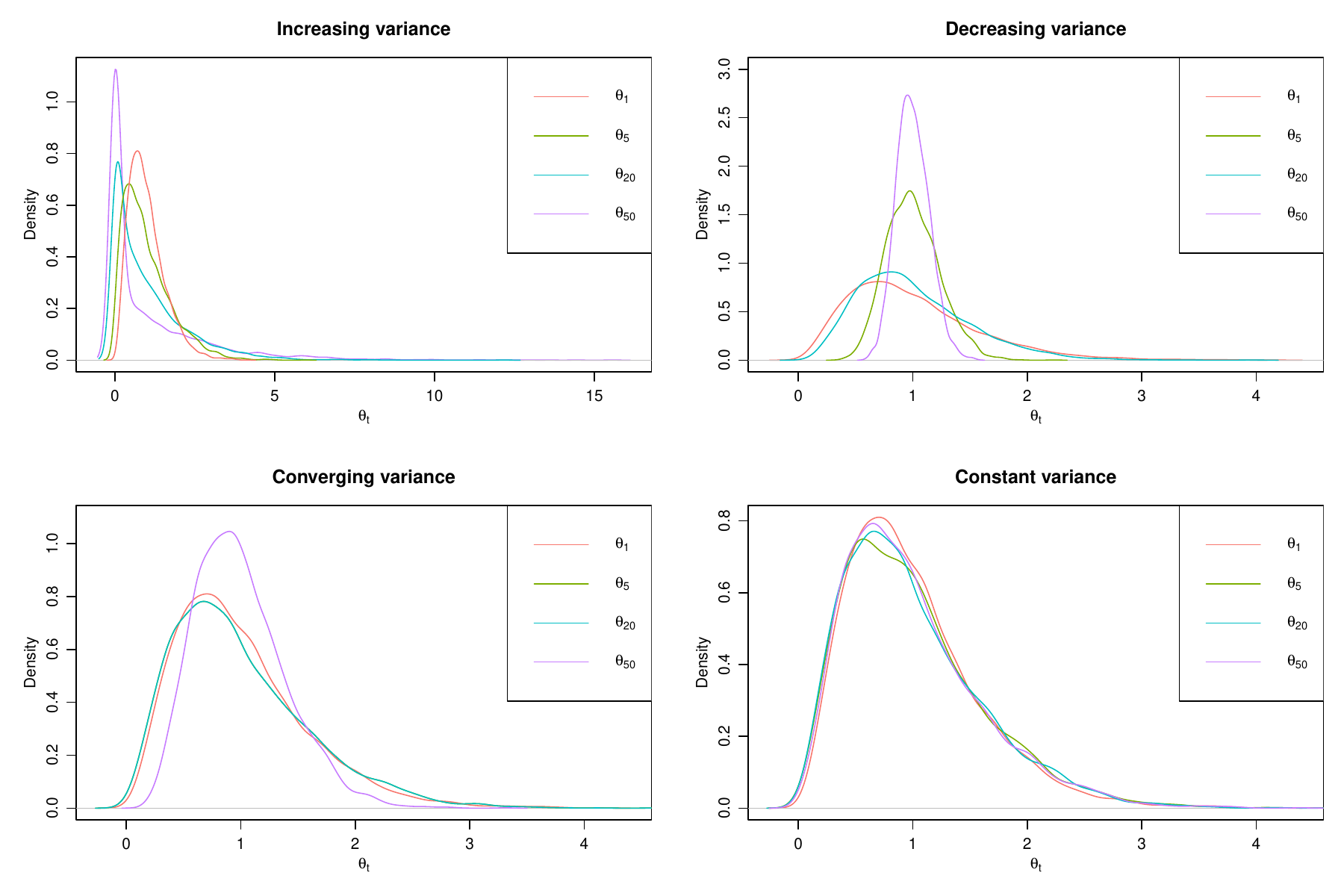}
        	\vspace{-20pt}
     \caption{Empirical density of $\Theta_t$ under each of the four scenarios at times $t=1,5,20,50$}
 \label{Figure 3}
\end{figure}

        	\vspace{-25pt}

                
\section{Real data analysis}
\label{sec.5}
We use the LGPIF (Local Government Property Insurance Fund) data from the state of Wisconsin. Although the dataset encompasses claims information across multiple types of coverages, in our analysis, we only focus on inland marine (IM) claims. The dataset consists of 6,775 observations from 1,234 policyholders, longitudinally observed for the period of the years 2006-2011.\footnote{The data is publicly available at \url{https://sites.google.com/a/wisc.edu/jed-frees/\#h.lf91xe62gizk}} We use the observations between 2006 and 2010 for model estimation, while the observations from year 2011 are set aside for out-of-sample validation. We refer the reader to \citet{frees2016multivariate} for a  detailed explanation about the data. Table \ref{Table 1} provides a brief summary statistics of the observed policy characteristics. We have one categorical covariate (entity location) available in the dataset with the following values: ``City", ``County", ``Miscellaneous", ``School", ``Town", and ``Village". We code this covariate as 5 binary variables (dummy coding), corresponding to the indicators of ``City", ``County", ``School", ``Town", and ``Village", with ``Miscellaneous" as the reference group. We also have two continuous covariates related to the coverage amount (i.e., the maximal amount covered per claim) and the deductible amount (i.e., the minimal damage to trigger a claim payment). These two covariates may vary in time for a given policyholder. Thus, we cannot fit the model with converging variance, since this latter requires time-invariant covariates (and expected frequencies $\lambda_t$, respectively). 

\begin{table}[h!t!]
\begin{center}
\caption{Policy characteristics used as covariates} \label{tab:datasummary}
\resizebox{!}{3cm}{
\begin{tabular}{l|lrrr}
\hline \hline
Categorical & Description &  & \multicolumn{2}{c}{Proportions} \\
levels \\
\hline
TypeCity & Indicator for city entity:           &  & \multicolumn{2}{c}{14.00 \%} \\
TypeCounty & Indicator for county entity:       &  & \multicolumn{2}{c}{5.78 \%} \\
TypeMisc & Indicator for miscellaneous entity:  &  & \multicolumn{2}{c}{11.04 \%} \\
TypeSchool & Indicator for school entity:       &  & \multicolumn{2}{c}{28.17 \%} \\
TypeTown & Indicator for town entity:           &  & \multicolumn{2}{c}{17.28 \%} \\
TypeVillage & Indicator for village entity:     &  & \multicolumn{2}{c}{23.73 \%} \\
\hline
 Continuous & & Minimum & Mean & Maximum \\
 variables \\
\hline
CoverageIM  & Logged coverage amount of IM claim   &  0 & 0.85
            & 46.75\\
lnDeductIM  & Logged deductible amount for IM claim     &  0 & 5.34
            & 9.21\\
\hline \hline
\end{tabular}}
\label{Table 1}
\end{center}
\end{table}

By letting $i$ be index of the policyholders, $i=1,\ldots, N=1,234$, and letting $T_i$ be the maximal number of observations for the $i^{th}$ policyholder, one can write the full log-likelihood as,
see \eqref{NB log-likelihood sequence},
\begin{equation}\label{fulllik}
\ell = \sum_{i=1}^N \sum_{t=1}^{T_i}\log p(Y_{i,t} | Y_{i,1:(t-1)}), \qquad Y_{i,t} | Y_{i,1:(t-1)} \sim {\rm NB}\left(\lambda_{i,t}
\, \frac{\alpha_{i,t|t-1}}{\beta_{i,t|t-1}},\,
\alpha_{i,t|t-1} \right),
\end{equation}
with expected frequency $\lambda_{i,t} = \exp( \mathbf{x}_{i,t} \eta )$,
regression parameter $\eta \in {\mathbb R}^d$,  and $\mathbf{x}_{i,t}\in {\mathbb R}^d$ are the observable policy characteristics of policyholder $i$ at time $t$ of dimension $d=8$; note that we add lower indices $i$ to all parameters, as these can now
be policyholder dependent. We consider special cases of Model \ref{mod.0},
namely, we assume $q^*_t/q^{**}_t=p \in [0,1]$ for all $t\ge 1$, and, moreover, $q^{**}_t \in (0,1]$ should
not depend on $i$.

This gives us recursive formulas for the shape and rate parameters 
$$\alpha_{i,t+1}=p q_{t}^{**}\alpha_{i,t}+q_{t}^{**}\left( 1-p\right)\beta_{i,t}+Y_{i,t+1} \quad\hbox{and}\quad \beta_{i,t+1}=q_{t}^{**}\beta_{i,t}+\lambda_{i,t+1},$$
for $t\ge 1$, and with initial values $\alpha_{i,1}= \alpha_{1|0} + Y_{i,1}$, $\beta_{i,1} = \beta_{1|0} + \lambda_{i,1}$, and $\beta_{1|0}=\alpha_{1|0}$. Moreover, we have for $t\ge 1$
\begin{equation*}
\alpha_{i,t+1|t} = \alpha_{i,t+1}-Y_{i,t+1} \quad \text{ and } \quad
\beta_{i,t+1|t} = \beta_{i,t+1}-\lambda_{i,t+1},
\end{equation*}
and we initialize all policyholders $i$ as follows $\alpha_{i,1|0}=\alpha_{1|0}$ and
$\beta_{i,1|0}=\beta_{1|0}$.

This allows us to implement the log-likelihood function \eqref{fulllik} for given observations
${\bf Y}=(Y_{1, 1:T_1}, \ldots, Y_{N, 1:T_N})$. Set maximal observation period $T=\max_{1\le i \le N} T_i$.
Then, the log-likelihood $\ell=\ell_{\bf Y}(\vartheta)$ is a function of the parameters
\begin{equation}\label{def: vartheta}
  \vartheta=(\beta_{1|0},  p, q^{**}_{1:(T-1)}, \eta) ~\in~ {\mathbb R}_+ \times [0,1] \times (0,1]^{T-1}\times {\mathbb R}^d.
\end{equation}  
Let us now consider the following models:
\begin{itemize}
   \item \text{Independent} latent factors model: $
\alpha_{i,t} = \alpha_{1|0}, \
\beta_{i,t} = \beta_{1|0}$ for all $t \geq 1$. 
    \item \text{Shared} random effect model: $p=1,\  q_t^{**}=q=1$.
    \item \text{Increasing} variance of $(\Theta_t)$: $p=1$, $q_t^{**}=q \in (0,1)$.
              \item \text{Decreasing} variance of $(\Theta_t)$:  $p\in (0,1), \ q_t^{**}=q=1$.
    \item \text{Constant} variance of $(\Theta_t)$: $p\in (0,1), \ q_t^{**}=\frac{\beta_{1|0}}{p^2\beta_{1|0} +(1-p^2)\beta_t}$.
\end{itemize}
We do not report the bounded variance case since the estimate lies in the boundary
to the increasing variance case.

As all of these models satisfy the generalized linear model (GLM) assumption $\EE{Y_{i,t}} = \lambda_{i,t}=\exp( \mathbf{x}_{i,t} \eta )$, and we use the following two-step estimate approach to estimate $\vartheta$ given in see \eqref{def: vartheta}:
\begin{enumerate}
 \item Estimate regression parameter $\eta \in {\mathbb R}^d$ from the standard NB GLM, which means we do not consider the serial correlations among $Y_{i, 1:T_i}$ at this stage. Note that this approach still yields a consistent estimate of $\eta$ as long as the mean model is correctly specified (but it is still less efficient as the variance structure may be misspecified).
\item After $\eta$ has been estimated from Step 1, estimate the parameters of the random effects dynamics such as $\beta_{1|0}$, $p$ and $q$ (if available).
\end{enumerate}
This two-step approach is consistent by the usual arguments on pseudo likelihood estimation, see \cite{gourieroux1984pseudo}, and it has two advantages. First, the second numerical optimization step is simple since it involves only a smaller number of parameters that are $\beta_{1|0}$, $p$ and $q$. Second, using this approach, we get the same estimator for the regression coefficients $\eta$ in front of the covariates for all the models considered, making it easier to compare them. It would have also been possible to estimate all the parameters together using maximum likelihood estimation, but implementation is more cumbersome and convergence may be an issue.

The model with constant variance shows the best goodness-of-fit as shown in Table \ref{Table 2} so that the constant variance model is the best in terms of AIC and the shared random effect model is the best in terms of BIC, while the difference is small. Note that the parameter estimation with decreasing variance model was unable to find a set of parameters sufficiently different from the shared random effect model. In this regard, one can conclude that the decreasing variance model is not suitable for this database.

\begin{table}[!h]
\caption{Estimated model parameters and goodness-of-fit for the considered models}
\centering
\begin{tabular}[t]{lccccc}
\toprule
\multicolumn{1}{c}{ } & Independent & Shared & Increasing & Decreasing & Constant \\
\midrule
$\beta_{1|0}$ & 0.488 & 0.651 & 0.786 & 0.651 & 0.603\\
$p$ & 0 & 1 & 1 & 1.000 & 0.937\\
$q$ & 1 & 1 & 0.830 & 1 & - \\ \midrule
Loglik & -934.135 & -905.357 & -904.317 & -905.357 & -902.019\\
AIC & 1886.271 & 1828.713 & 1828.633 & 1830.713 & 1824.039\\
BIC & 1946.068 & 1888.511 & 1895.075 & 1897.155 & 1890.481\\
\hline\hline
\end{tabular}
\label{Table 2}
\end{table}

Using the observations from year 2011 as the out-of-sample validation set, we assess the predictive performance of the aforementioned models. We use the RMSE (root mean-squared error), the MAE (mean-absolute error), and the PDL (Poisson deviance loss) defined as follows
$$
\begin{aligned}
\text{RMSE}&= \sqrt{\frac{1}{|\mathcal{T}|}\sum_{i \in \mathcal{T}}\left(Y_{i,T_i+1} - \widehat{Y}_{i,T_i+1}\right)^2}, \\
\text{MAE}&= \frac{1}{|\mathcal{T}|}\sum_{i \in \mathcal{T}}\left|Y_{i,T_i+1} - \widehat{Y}_{i,T_i+1}\right|, \\
\text{PDL}&= \frac{1}{|\mathcal{T}|}\sum_{i \in \mathcal{T}}2\left(\widehat{Y}_{i,T_i+1} - {Y}_{i,T_i+1}- {Y}_{i,T_i+1} \log \left(\frac{\widehat{Y}_{i,T_i+1}}{{Y}_{i,T_i+1}}\right) \right), \\
\end{aligned}
$$
where $\mathcal{T}$ is the number of observations in the validation set $\mathcal{T}$, and
$\widehat{Y}_{i,T_i+1}$ are the forecasts obtained from the fitted models. We prefer a model with lower values of RMSE, MAE, and/or PDL, and it turns out that the model assuming increasing variance shows the best predictive performance in our example, as shown in Table \ref{Table 3}. This change of ranking with respect to Table \ref{Table 2} may have many reasons, e.g., non-stationarity of the data which likely increases the state-space process if not properly modeled. This closes our example.
\begin{table}[!h]
\caption{Out-of-sample validation performance}
\centering
\begin{tabular}{lrrrrr}
\toprule
  & Independent & Shared & Increasing & Decreasing & Constant\\
\midrule
RMSE & 9.0586 & 0.7091 & 0.5896 & 0.7091 & 0.8240\\
MAE & 0.3821 & 0.1107 & 0.1048 & 0.1107 & 0.1143\\
PDL & 0.8407 & 0.2523 & 0.2425     & 0.2523 & 0.2572\\
\hhline{======}
 \label{Table 3}
\end{tabular}
\end{table}


\section{Conclusion}
\label{sec.6}
In this paper, we expanded the observation-driven state-space model of \cite{harvey1989time} to a broader spectrum of specifications characterized by various variance process behaviors. They are suitable for count processes with a constant mean, but with increasing, decreasing, constant, converging, or bounded variance process. These models inherit most of the major advantages of state-space models, but are more tractable for regression modeling than their parameter-driven counterparts.  Additionally, we elucidated the relationship of this model class with the INGARCH literature, see \cite{gonccalves2015infinitely}, and also drew connections to the forecasting literature that focuses on exponential smoothing, see \cite{hyndman2008forecasting}.


\section*{Acknowledgments}
Jae Youn Ahn is partly supported by a National Research Foundation of Korea (NRF) grant funded by the Korean Government
and Institute of Information \& communications Technology Planning \& Evaluation (IITP) grant funded by the Korea government (MSIT).
Yang Lu thanks NSERC through a discovery grant [RGPIN-2021-04144, DGECR-2021-00330]. Himchan Jeong
is supported by the Simon Fraser University New Faculty Start-up Grant (NFSG).


\bibliographystyle{apalike}
\bibliography{CTE_Bib_HIX2}


\appendix
\section{Proofs}
\subsection{Proof of Lemma \ref{lem.app.1}}
\label{proofoflemma1}
\paragraph{Part $i)$}
We proceed by induction. From \eqref{eq1}, we get $\EE{\Theta_1}=1$ and $\EE{Y_1}=\lambda_1$. Hence $\EE{\alpha_1}=\beta_1$. 
  Assume that
$  \EE{\Theta_{t}}=1$ and $\EE{\alpha_{t}}=\beta_{t}
$
for some $t \ge 1$. First, we have $\EE{\Theta_{t+1}}=1$ from the following consideration, in view of
\eqref{stillgamma0},
  \[
  \begin{aligned}
    \EE{\Theta_{t+1}}&=\EE{\EE{\Theta_{t+1} \,|\, Y_{1:t}}}
    =\EE{\EE{\left.\frac{\alpha_{t+1|t}}{\beta_{t+1|t}} \right| Y_{1:t}}}\\
  &=\EE{\frac{q_{t}^*\alpha_{t}+\left( q_{t}^{**}- q_{t}^{*}\right)\beta_{t}}{q_{t}^{**}\beta_{t}}}\\
  &= \frac{q_{t}^{*}}{q_{t}^{**}}\EE{\frac{\alpha_{t}}{\beta_{t}}}+\frac{q_{t}^{**}-q_{t}^{*}}{q_{t}^{**}}=1.
  \end{aligned}
  \]
 The also  implies $\EE{Y_t}=\lambda_t$ for all $t\ge 1$.
 Then, from the definition in \eqref{eq41} and \eqref{eq42} along with the induction assumption, we have
 for $t\ge 1$
  \[
  \begin{aligned}
  \EE{\alpha_{t+1}}&=\EE{q_{t}^*\alpha_{t}+\left( q_{t}^{**}- q_{t}^{*}\right)\beta_{t}+Y_{t+1}}\\
  &=q_{t}^*\EE{\alpha_{t}}+\left( q_{t}^{**}- q_{t}^{*}\right)\beta_{t}+\EE{Y_{t+1}}\\
  &=q_{t}^{**}\beta_{t}+\lambda_{t+1}=\beta_{t+1}.
  \end{aligned}
  \]

\paragraph{Part $ii)$} This is a direct consequence of Part $i)$.

\paragraph{Part $iii)$} We have
  \[
    \EE{\Var{\Theta_{t+1}\,|\,Y_{1:t}}}=
    \frac{1}{\beta_{t+1|t}}\EE{\EE{\Theta_{t+1}\,|\, Y_{1:t}}}
    =\frac{1}{q_{t}^{**}\beta_{t}},
  \]
  where the first equation follows from \eqref{stillgamma0}.

 \paragraph{Part $iv)$} Equation \eqref{stillgamma0} implies
$
\EE{\Theta_{t+1}\,|\, Y_{1:t}}=\frac{\alpha_{t+1|t}}{\beta_{t+1|t}}=
\frac{\alpha_{t}}{\beta_{t}}\frac{q_{t}^*}{q_{t}^{**}}+
\frac{q_{t}^{**}-q_{t}^*}{q_{t}^{**}},
$
which implies
\[
\Var{\EE{\Theta_{t+1}\,|\, Y_{1:t}}}=\left( \frac{q_{t}^{*}}{q_{t}^{**}} \right)^2\Var{\frac{\alpha_{t}}{\beta_{t}}}
=\left( \frac{q_{t}^{*}}{q_{t}^{**}} \right)^2\Var{\EE{\Theta_{t}\,|\, Y_{1:t}}}.
\]

\subsection{Proof of Lemma \ref{lem.5}}
\label{proofoflem.5}
By comparing equations \eqref{nextvariance} and \eqref{previousvar3iance}, we have
\begin{equation}\label{eq.15}
\Var{\Theta_{t+1}}-\Var{\Theta_{t}}= \left( \frac{1}{q}-1\right) \frac{1}{\beta_{t}}>0.
\end{equation}
If the exogenous process $(\lambda_t)$ is bounded from above by $M>0$, then $\beta_t$, which satisfies the recursion $\beta_t=q \beta_{t-1}+\lambda_t$, is bounded from above by $\frac{M}{1-q}$. Hence $\frac{1}{\beta_{t}}$ is bounded from below. 
Thus, the sequence of variances $(\Var{\Theta_t})$ increases to infinity when $t$ goes to infinity.

Furthermore, if $\lambda_t$ is bounded from below by $m>0$, then by the total variance decomposition formula, $\Var{Y_t}$ increases to infinity as well as $t\to\infty$.

\subsection{Proof of Lemma \ref{lem.3}}
\label{proofoflem.3}
    Because the sequence of variances $(\Var{\Theta_t})$ is decreasing but positive, it converges to a non-negative constant.In other words, the left hand side of equation \eqref{varianceisdecreasing} goes to zero as $t\to\infty$. As a consequence, the right hand side of \eqref{varianceisdecreasing} also goes to zero. Thus, $\Var{ \frac{\alpha_{t}}{\beta_{t}}}= \frac{1}{\beta_{t}^2} \Var{ \alpha_{t}}$ converges to zero.     
    Hence, the second term in  \eqref{previousvar3iance} goes to zero when $t$ goes to infinity. As for the first term in  \eqref{previousvar3iance}, we have $ \EE{ \frac{\alpha_{t}}{\beta^2_{t}}}=  \frac{1}{\beta^2_{t}} \EE{\alpha_{t} }=\frac{1}{\beta^2_{t}} \beta_{t} $ goes to zero as well since $\beta_t$ goes to infinity as $t$ goes to infinity. As a consequence, we deduce that $(\Var{\Theta_t})$ converges to zero when $t$ goes to infinity.

    Finally, by the total variance decomposition
    \begin{equation}\label{variance decomposition Y}
    \Var{ Y_t }=\mathbb{E}\Big[ \Var{ Y_t|\Theta_t}\Big]+  \Var{   \mathbb{E}[Y_t|\Theta_t]}
    =\lambda_t\mathbb{E}[\Theta_t]+ \lambda_t^2 \Var{ \Theta_t}.
   \end{equation}
 Under the additional assumption \eqref{lambdaalsoconstant}, the latter 
    converges to $\lambda \mathbb{E} [\Theta_t]=\lambda \mathbb{E} [\Theta_1]=\lambda$ as $t$ goes to infinity. 
    
\subsection{Proof of Lemma \ref{lmm5bis}}
\label{proofoflem5bis}
First, since $(\lambda_t)$ is bounded, $\beta_t$ is also bounded from above and below by positive constants. Then, by
\eqref{variance decomposition Y} and applying the total variance decomposition formula once more
    \begin{align*}
     \Var{ Y_t}&=\lambda_t\mathbb{E}[\Theta_t]+\lambda_t^2\Big( \Var{ \alpha_t/\beta_t}+\mathbb{E}[\alpha_t/\beta^2_t] \Big).
    \end{align*}
    Thus it suffices to show that $ \Var{\alpha_t}$ is bounded.
    We have
    \begin{align}
        \Var{\alpha_t }&= \Var{pq\alpha_{t-1}+\beta_{t-1}q(1-p)+Y_t} \nonumber \\
       &=  \Var{ pq\alpha_{t-1}+Y_t}\nonumber \\
       &=\mathbb{E}\Big[  \Var{pq\alpha_{t-1}+Y_t|Y_{1:(t-1)} }\Big]+ \text{Var}\Big( \mathbb{E}[pq\alpha_{t-1}+Y_t|Y_{1:(t-1)}]\Big) \nonumber \\
                       &=\mathbb{E}\Big[  \Var{Y_t|Y_{1:(t-1)}}\Big]+\text{Var}\Big( pq\alpha_{t-1}+  \mathbb{E} [Y_t|Y_{1:(t-1)}] \Big)\label{step1}\\
                       &=\mathbb{E}\left[\lambda_t\frac{\alpha_{t|t-1}}{\beta_{t|t-1}}+  \lambda_t^2\frac{\alpha_{t|t-1}}{\beta^2_{t|t-1}}\right]
                         +\text{Var}\left(pq\alpha_{t-1}+\lambda_t\frac{\alpha_{t|t-1}}{\beta_{t|t-1}} \right)
      , \label{step2}
 \\
       &=\left(\frac{\lambda_t}{q \beta_{t-1}}+\frac{\lambda_t^2}{q^2 \beta^2_{t-1}}\right)\mathbb{E}\left[pq \alpha_{t-1}+q(1-p)\beta_{t-1}\right] \nonumber \\
       & \qquad +  \text{Var} \left( pq\alpha_{t-1}+\lambda_t \frac{ pq\alpha_{t-1}+q(1-p)\beta_{t-1}}{q \beta_{t-1}} \right), 
 \nonumber   \end{align}
    where from \eqref{step1} to \eqref{step2} we have used the mean and variance formula of the conditional distribution of $Y_t$, given $Y_{1:(t-1)}$, which is NB, see \eqref{predictivemean}. 

    The first term on the right hand side is upper bounded, since $\mathbb{E}[\alpha_{t-1}]=\beta_{t-1}$ is bounded.
    The last term can be written as, note $\beta_{t-1}$ is deterministic,
    $$
    \text{Var}\Big(\alpha_{t-1} (pq+ \lambda_t p/\beta_{t-1})\Big)=\text{Var}\Big( p  \frac{q \beta_{t-1}+\lambda_{t}}{\beta_{t-1}}\alpha_{t-1} \Big)=\Var {p\frac{\beta_t}{\beta_{t-1}} \alpha_{t-1}}.
    $$
    Thus, by dividing both sides of \eqref{step2} by $\beta_{t}^2$, which is both upper and lower bounded by positive constants, we get
    $$
    \Var {\frac{\alpha_t}{\beta_t}}= p^2    \Var{ \frac{\alpha_{t-1}}{\beta_{t-1}}}+ \text{bounded term}.
    $$
By recurrence, we deduce that $\Var{\frac{\alpha_t}{\beta_t}}$ is upper bounded. Therefore, $\Var{\alpha_t}$ is also upper bounded. Hence, $\Var{Y_t}$ is also upper bounded.

\end{document}